\definecolor{Darkblue}{rgb}{0,0,0.4}
\definecolor{Brown}{cmyk}{0,0.81,1.,0.60}
\definecolor{Purple}{cmyk}{0.45,0.86,0,0}
\newcommand{\lref}[2][]{\hyperref[#2]{#1~\ref*{#2}}}
\newtheorem{definition}{Definition}
\newtheorem{theorem}{Theorem}
\newtheorem{lemma}[theorem]{Lemma}
\newtheorem{corollary}[theorem]{Corollary}
\newtheorem{claim}[theorem]{Claim}
\newtheorem{observation}[theorem]{Observation}
\newcommand{\NN}{\ensuremath{\mathbb{N}}}
\newcommand{\R}{{\mathcal R}}
\renewcommand{\Pr}[1]{\mbox{\rm\bf Pr}\left[#1\right]}
\newcommand{\Ex}[1]{\mbox{\rm\bf E}\left[#1\right]}
\newcommand{\e}{\mathrm{e}}
\newcommand{\noise}{\nu}
\DeclareMathOperator{\avg}{avg}
\title{Dynamic Packet Scheduling in Wireless Networks}
\author{%
  Thomas Kesselheim%
  \thanks{Department of Computer Science, RWTH Aachen University, {\tt
      kesselheim@cs.rwth-aachen.de}. Supported by UMIC
    Research Centre at RWTH Aachen University.}  
}
\begin{document}

\maketitle
\begin{abstract}
We consider protocols that serve communication requests arising over time in a wireless network that is subject to interference. Unlike previous approaches, we take the geometry of the network and power control into account, both allowing to increase the network's performance significantly. 
We introduce a stochastic and an adversarial model to bound the packet injection. Although taken as the primary motivation, this approach is not only suitable for models based on the signal-to-interference-plus-noise ratio (SINR). It also covers virtually all other common interference models, for example the multiple-access channel, the radio-network model, the protocol model, and distance-2 matching. Packet-routing networks allowing each edge or each node to transmit or receive one packet at a time can be modeled as well.

Starting from algorithms for the respective scheduling problem with static transmission requests, we build distributed stable protocols. This is more involved than in previous, similar approaches because the algorithms we consider do not necessarily scale linearly when scaling the input instance. We can guarantee a throughput that is as large as the one of the original static algorithm. In particular, for SINR models the competitive ratios of the protocol in comparison to optimal ones in the respective model are between constant and $O(\log^2 m)$ for a network of size $m$.
\end{abstract}

\thispagestyle{empty}
\setcounter{page}0
\clearpage
\section{Introduction}
In order to exploit the full potential of wireless communication, it is crucial to suitably deal with the effect of interference, being one of the main limits of a wireless network's performance. Simultaneous transmissions may collide but only if they are not far enough apart. So in order to utilize the available time as efficient as possible, parallel communication has to be going on in spite of interference. In recent time, the resulting algorithmic problems have attained much interest, particularly the ones in the SINR model. Here, interference constraints are modeled much more realistically than in conventional models derived from graph theory. The SINR model takes accumulation of interference into account and allows to consider the effects of power control. That is, for each transmission an individual power can be selected. This technology has been shown to have a giant impact on the network's performance in theory as well as in practice. So far, algorithmic studies in the SINR model have mainly considered problems of the following flavor. Given a set of $n$ transmission requests, compute a schedule of minimum length such that all transmissions can be carried out successfully, possibly with the freedom of selecting the transmission powers. Although algorithmically challenging, this is a very limited view as it does not take into consideration that transmission requests actually arise over time.

For communication requests arriving in a network by adversarial injection or a stochastic process over time, algorithmic research has mainly considered two scenarios. In \emph{packet-routing networks} the focus lies on multi-hop communication in a wireline network. That is, packets have to use intermediate nodes until reaching their target node. The restriction is that on each communication link only a single packet may be transmitted in a time slot. In scheduling problems on a \emph{multiple-access channel}, a number of users have to share a channel but only one user can successfully transmit over the channel at a time. Although both approaches have also been applied in the context of wireless networks, they do not account for the geometry of the network. Packet routing networks neglect all effects of interference between communication links. In contrast, the multiple-access-channel model overestimates interference as it does not take the locality of interference into consideration.

In this paper, we aim at bridging the gap between these different settings. We consider a general model for dynamic packet injection that allows to take the  aspects of interference into consideration. For example, this includes the mentioned advantages of the SINR model such as the spatial separation of transmission but also different transmission powers and the fact if transmission powers are fixed or can be chosen by the protocol. In order to cover these different variants, our approach is quite general. Although the SINR model was the primary motivation, this has the interesting consequence that virtually all interference models are covered, such as the multiple-access channel, the radio-network model, the protocol model, and distance-2 matching. Furthermore packet-routing networks allowing that each edge or each node transmits or receives one packet at a time can be modeled as well.

We study \emph{stable} scheduling protocols. That is, the expected time for each packet from injection until delivery (latency) is bounded. Our objective is to build protocols of maximal throughput that guarantee stability. In order to express this performance, we say that a protocol is $\gamma$-\emph{competitive} if the following holds. Assuming that there is some way an optimal protocol could serve all arising transmission requests, our protocol would be able to do so as well if time was stretched by the factor $\gamma$. Technically, we consider transmission requests arising from stochastic and adversarial injection. In the stochastic model, the injection by a finite number of independent users has to be a convex combination of feasible sets, scaled by factor $\gamma$. In addition to that, we adapt the popular model of a window adversary. During each interval of length $w$, the adversary may only inject packets that could be served in that time, scaled by $\gamma$. 

We make use of the results for scheduling static transmission requests by giving a black-box transformation to the dynamic model. For a number of algorithms, the exact throughput bound can be transferred to the dynamic case. Using comparisons to the optimal static schedule length, this yields competitive ratios that are as good as the approximation factor of the static algorithms. 

\subsection{Our Contribution}
We introduce a model for adversarial and stochastic packet injection in a wireless network that is suitable for SINR models. Like in a packet-routing network, injected packets may have to use intermediate nodes in order to reach the final destination. For this purpose, the network nodes correspond to vertices of a graph that are connected by an edge if a transmission can take place between them. Due to the diversity in assumptions, we model the aspects of interference in a generic and abstract way. We choose a suitable matrix $W$ quantifying the relative amount of interference of one edge on another one. Based on this matrix, we define an adversarial and a stochastic injection model, limiting the average amount of packets injected per time slot by the injection rate $\rho$ as follows. If $F(e)$ is the average number of packets using edge $e$ then each entry of the vector $W \cdot F$ has to be bounded by $\rho$. 

The definition is motivated by \emph{linear interference measures} for the SINR model. Consider a static single-hop instance in which $n$ packets have to be transmitted from their sender to the respective receiver, the interference measure defined by the matrix $W$ is given as $I = \max_{e \in E} \sum_{e' \in E} W_{e, e'} R(e')$. Here, $R(e)$ denotes the number of packets that have to be transmitted via the edge $e$.  Examples of existing static scheduling algorithms generate schedules of length $O(I \cdot \log n)$ or $O(I + \log^2 n)$ with high probability for the respective interference measure $I$.

In related work \cite{Goldberg2000,Rabani1996,Scheideler2000} typically a protocol for dynamic injection is built by repeatedly running a static algorithm for a suitably long time. In our case this does not have the desired effect in general. For example, when scaling the number of communication requests per edge, an $O(I \cdot \log n)$ schedule length increases super-linearly since both $I$ and $n$ increase. Thus, having more packets the throughput decreases. In order to deal with this problem we show in the first step how to transform these algorithms to ones that are suitable for dense instances. We exploit the fact that there are only $m$ possible communication links. This allows us to improve the scaling behavior of an algorithm computing schedules of length $O(f(n) \cdot I)$ with high probability to $O(f(m) \cdot I + g(m, n))$, where $f(m)$ only depends on the network size and $g(m, n)$ grows sub-linearily in $n$. 

The algorithms resulting from the first transformation are suitable to be used in the dynamic scenario. Here, we divide time into sufficiently long time frames. In each of them, the static algorithm is executed with the intention that each injected packet is transmitted via one hop in each time frame. However, packets may fail due to too many injected packets or collisions in the algorithm. These packets are treated by separate executions of the algorithm. This protocol is shown to be stable for injection rates corresponding to the throughput of the respective static algorithm. As a result, we obtain the static approximation factor as the competitive ratio. In particular, for the SINR model we achieve competitive ratios between constants and $O(\log^2 m)$. The expected latency of a packet is also shown to be poly-logarithmic in the size of the network.

Depending on the properties of the algorithm, we obtain a distributed protocol. In order to run the transformation, the network nodes only need the knowledge of a global clock, and the size of the network, the injection rate and (for the case of adversarial injection) the window size. It is reasonable that this information is available to each network node as it is static, that is, it does not depend on the packet injections and can be set at the deployment time. Furthermore, we show that being aware of a global clock is inevitable. When assuming only local clocks, no protocol for the SINR model with uniform transmission powers can be $m / 2 \ln m$-competitive. Achieving $O(m)$-competitiveness is trivial by falling back to the multiple-access channel model.

\subsection{Related Work}
The analysis of stochastically arriving transmission requests in a wireless network has first been considered in the context of ALOHA \cite{Abramson1970}. Here, a multiple-access channel is considered, that is only one transmission request can be served at a time. Over the years, this work has been continued under a large number of different assumptions, e.g. if there are finitely or infinitely many users or how much feedback the transmitters get, see e.g. \cite{Hastad1996,Raghavan1998,Goldberg2000,Goldberg2000a} or \cite{Chlebus2001} for an overview. A related model and problem class to be mentioned here is broadcast in radio networks \cite{BarYehuda1992,Clementi2001,Demaine2006}. Here, a simple natural extension of the multiple-access channel is considered by assuming that not all nodes can hear each other but messages might have to take intermediate nodes. A transmission is successfully received by a node if exactly one of its neighbors makes a transmission attempt.

A different approach for dynamic scheduling in wireless networks has been considered by Tassiulas and Ephremides~\cite{Tassiulas1992}. They consider a network with arbitrary interference constraints, where in each round transmission requests arise by an independently, identically distributed process. Tassiulas and Ephremides prove optimality of a protocol that selects in each round a maximum weight set of communication links. The protocol is optimal because it is stable for any injection for which there is some stable protocol. However, this protocol is neither distributed nor can it be computed in polynomial time in general. Viewed from this perspective, we show how to approximate this optimal protocol.

The probably most popular approach to bound adversarial packet injection was presented by Borodin et al.~\cite{Borodin2001} and refined by Andrews et al.~\cite{Andrews1996}. The general idea is that there is some window size $w$. The adversary is $(\lambda, w)$-bounded if during any interval of $w$ time steps for any edge $e \in E$ at most $\lambda \cdot w$ packets are injected having the edge $e$ on their path. Andrews et al.~show that very simple local policies such as \emph{shortest-in-system} (SIS) guarantee that for each $\lambda < 1$ the number of undelivered packets in the system is bounded at any time. The protocol by Aiello et al.~\cite{Aiello1998} achieves essentially the same result but does not have to know the routing paths. It only suffices that there are paths (only known to the adversary) that make the adversary $(\lambda, w)$-bounded.

The model of a $(\lambda, w)$-bounded adversary has also been applied to the multiple access channel~\cite{Bender2005,Chlebus2006,Chlebus2007}.  The idea is that in each time interval of length $w$ at most $\lambda \cdot w$ packets can arrive. Chlebus et al.~\cite{Chlebus2006} show that quite simple deterministic protocols are stable for all $\lambda < 1$, whereas stability for $\lambda = 1$ is impossible for distributed protocols. Further adaptations of the window adversary also consider wireless networks \cite{Andrews2004,Cholvi2010}. However, in these cases interference is again completely neglected. To the best of our knowledge adversarial injection taking locality of interference into account has not yet been considered.

While commonly the only criterion is bounded delay, Rabani and Tardos~\cite{Rabani1996}, and Scheideler and V\"ocking~\cite{Scheideler2000} show how to achieve small delays by transforming static packet-routing algorithms. The second part of our transformation is inspired by the one of Scheideler and V\"ocking and structurally similar. However, in order to achieve stability they use SIS as a fallback solution, which is known to yield stability. This is not possible in our case since no stable protocols have been known up to now. Furthermore, their analysis and way to cope with dependencies is complex and tailored to the packet-routing case. For this reason, our analysis does not have much in common with the one by Scheideler and V\"ocking.

Recently, algorithm research has started considering scheduling wireless transmissions in the SINR model. As already mentioned, this model accounts for accumulation effects of interference and the possibility of selecting transmission powers. These different transmission powers offer a new degree of freedom to the problem depending on if the powers are specified as part of the input or if they can be set by the algorithm. For the first problem of dealing with fixed transmission powers, e.g. uniform powers, a number of algorithms have been proposed, centralized \cite{WattenhoferINFOCOM09,Halldorsson2009} and distributed \cite{Fanghaenel2009,Kesselheim2010,Halldorsson2011a} ones. For the problem in which transmission powers can be set by the algorithm one approach is to set transmission powers obliviously, depending on the distance between the sender and the receiver \cite{Fanghaenel2009a,Halldorsson2009a,Halldorsson2011}. While this approach can achieve only trivial approximation factors in terms of $n$, there is also a centralized algorithm achieving an $O(\log n)$ approximation guarantee \cite{Kesselheim2011}.

Further studies in the context of SINR approximation algorithms include distributed capacity maximization \cite{Andrews2009,Dinitz2010,Asgeirsson2011} and online capacity maximization \cite{OnlineSPAA}. When considering conflict graphs with a small inductive independence number \cite{Hoefer2011,Hoefer2011a}, a similar abstraction as in this paper is used. However, we aim at building distributed protocols in this paper whereas the approach in \cite{Hoefer2011} is rather appropriate for centralized, LP-based approximations.
\section{Formal Definition of the Network Model}
\label{sec:networkmodel}
We assume the wireless network to be modeled as a directed graph $G = (V, E)$. The vertex set $V$ corresponds to the set of network nodes. The set $E$ indicates the set of possible communication links between two nodes. As the graph is not necessarily complete, we assume that packets might need to be transmitted via intermediate nodes before reaching their final destination. These paths are fixed for each packet, e.g., by routing tables. They may, in principle, visit nodes multiple times. They are only restricted to have length at most $D$. We will use $m := \max\{ \lvert E \rvert, D\}$ as the significant network size.  

Via each communication link at most one packet may be transmitted per time step. Furthermore, transmissions on different links are also subject to interference. Generalizing multiple variants of the SINR model but also packet-routing networks, the multiple access channel and a broad class of wireless interference models (see Sections~\ref{sec:sinr} and \ref{sec:furtherapplications}), we employ a \emph{linear interference measure}. This is, there is some matrix $W$ to express the (relative) impact that a transmission on one link has to a transmission on another one. It is chosen later on based on the geometry and the interference assumptions. More precisely, for two edges $e$ and $e'$, the quantity $W_{e, e'} \in [0, 1]$ indicates, how much a transmission on $e$ is interfered by a transmission on $e'$. We assume that $W_{e, e} = 1$ for all $e \in E$. Given a set of paths let $R(e)$ denote the number of paths including edge $e$ somewhere. The interference measure induced by the vector $R$ is now given by $I := \lVert W \cdot R \rVert_\infty = \max_{e} \sum_{e'} W_{e, e'} \cdot R(e')$.

The case of packet routing networks can be captured by setting $W$ to the identity matrix and receive the congestion as the interference measure. For the multiple-access channel, we can set $W$ to the matrix whose entries are all $1$. In this case $I$ is simply the total number of packets to be delivered.

\subsection{Injection Models}
For the communication requests arising over time, we adapt two famous models: time-independent, finite-user stochastic injection and injection by a window adversary. In either case, the injected packets are assumed to have a fixed path through the network. We bound the average interference measure of all communication requests injected per time slot. If $F(e)$ is the average number of packets that have to be transmitted via edge $e$, the injection rate $\lambda$ is the largest component of the vector $W \cdot F$. 

For the stochastic model, we take the following assumptions. We assume that there is a finite number of packet generators each of which injects at most one packet per time slot at random. The probability distribution is identical in each time slot and independent among different generators or different time slots. Formally, let $X_{g, P}^{t}$ be $1$ if generator $g$ injects in time slot $t$ a packet that shall be routed along path $P$. We assume these random variables to have the following three properties. (a) The injection in each time step is identically distributed. That is for any pair $t_1$ and $t_2$ the random variables $X_{g, P}^{t_1}$ and $X_{g, P}^{t_2}$ have to be identically distributed for all $g$ and $P$. (b) The injection of different generators and in different time slots is independent. Formally, we require independence of any subset of random variables $X_{g, P}^{t}$ in which no pair shares both the same $t$ and the same $g$. (c) Each generator only injects a single packet per time slot. That is for any fixed $t$, and $g$ only one of the $X_{g, P}^{t}$ can be 1. We require each component of $W \cdot F$ to be bounded by $\lambda$, where $F(e) = \sum_g \sum_{P: e \in P} \Ex{X^{t}_{g, P}}$.

Furthermore, we consider a $(w, \lambda)$-bounded adversary for an arbitrary $w \in \NN$. That is, considering an arbitrary interval of $w$ time slots, we require that the interference measure induced by all links of the respective paths is at most $w \cdot \lambda$. Formally, let $R(e)$ be the number of packets including edge $e$ on the path injected during that interval. Then each component of the vector $W \cdot R$ is bounded by $w \cdot \lambda$.
\section{Static Algorithms for Large Packet Numbers}
\label{sec:largepacketnumbers}
Existing approaches to use static scheduling algorithms in a dynamic environment \cite{Goldberg2000,Rabani1996,Scheideler2000} all share the idea of running the algorithm repeatedly for a suitably long time. As in these cases the schedule length grows linearly, this does not decrease the throughput and at the same time failures are less likely. In our case, however, the situation is different. Consider for example an algorithm that computes a schedule of length $O(I \cdot \log n)$ for $n$ packets with high probability. Then doubling all packets does not only double the number of time slots used as both $I$ and $n$ are doubled. Our solution to this problem is to exploit that there are only $m$ different links that can be used for transmissions. Starting e.g. from an $O(I \cdot \log n)$ algorithm, our transformation yields an $O(I \cdot \log m + \log n \cdot \log^2 m + \log^2 n \cdot \log m)$-algorithm. That is, for sufficiently many transmission requests, the schedule length becomes linear in $I$. 

More precisely, we assume that there is some algorithm $\mathcal{A}(I, n)$ that generates a schedule of length $f(n) \cdot I$ with probability $1 - \nicefrac{1}{n}$ if the interference measure is at most $I$ and the number of packets is at most $n$. Algorithm~\ref{alg:largepacketnumbers} runs $\mathcal{A}$ repeatedly on randomly selected subsets of the communication requests. Each edge randomly selects delay values for all waiting packets. The algorithm is then executed on all packets having received the same delay, assuming the interference measure of these packets is at most $\chi = 6(\ln m + 9)$. At some point not many packets remain and we can go back to the original algorithm.

\begin{algorithm}
\label{alg:largepacketnumbers} 
\caption{Transformation to get a schedule length that is independent of the number of packets}
\For{$i:=1$ to $\xi = \left\lceil \log(I / 2 \phi \chi \log n) \right\rceil$}{
Assign each remaining packet a delay of at most $\left\lceil \nicefrac{2^{-i + 1} \cdot I}{\chi} \right\rceil$\;
Execute $\mathcal{A}(\chi, m \cdot \chi)$ for $f(m \chi) \cdot \chi$ steps on all packets that received the same delay\;
}
\For{$i:=1$ to $\lceil\phi\rceil + 1$}{
Execute $\mathcal{A}(2 \phi \chi \log n, n)$ on the remaining packets\;
}
\end{algorithm}
%\vspace{-.5cm}

In the analysis, we show that with high probability the interference measure induced by the remaining transmission requests reduces by a factor of two in each iteration of the for loop. Thus, after $\xi$ iterations, the interference measure has become as small as $O(\log n \cdot \log m)$ and for this reason the original algorithm can schedule all remaining packets with it in $O(f(n) \cdot \log n \cdot \log m)$ steps. 

\begin{theorem}
\label{theorem:largepacketnumbers} 
If $\mathcal{A}(I, n)$ uses at most $f(n) I$ steps with probability at least $1 - \nicefrac{1}{n}$, then for each constant $\phi \geq 0$ Algorithm~\ref{alg:largepacketnumbers} uses at most $2 \cdot f(m \chi) \cdot I + O(\log n \cdot f(m \chi) + f(n) \cdot \log n \cdot \log m )$ with probability at least $1 - \nicefrac{1}{n^\phi}$.
\end{theorem}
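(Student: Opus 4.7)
I will let $R_i$ be the vector counting packets remaining per edge after iteration $i$ of the first for loop, put $I_i := \lVert W\cdot R_i\rVert_\infty$ with $I_0 \leq I$, and prove by induction on $i$ that, except on an event of probability at most $1/n^\phi$, $I_i \leq I/2^i$ for every $i\in\{1,\dots,\xi\}$. This will imply $I_\xi \leq 2\phi\chi\log n$, so in the second for loop the precondition of $\mathcal{A}(2\phi\chi\log n, n)$ is satisfied, and $\lceil\phi\rceil+1$ independent attempts drive its joint failure probability below $1/n^{\phi+1}$.

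Fix an iteration $i$ and condition on $I_{i-1}\leq I/2^{i-1}$. Since $d_i := \lceil 2^{-i+1}I/\chi\rceil \geq 2I_{i-1}/\chi$, for every edge $e$ and every delay value $j$ the quantity $S_{e,j} := \sum_{e'} W_{e,e'}\,|\{p\text{ with delay }j\text{ on }e'\}|$ is a sum of independent Bernoulli$(1/d_i)$ random variables with coefficients in $[0,1]$ and mean at most $\chi/2$. A Chernoff bound on such weighted sums gives $\Pr[S_{e,j}>\chi]\leq e^{-\Omega(\chi)} = m^{-\Omega(1)}$ for the chosen $\chi=6(\ln m+9)$, and a union bound over $m$ edges and $d_i$ delay slots keeps the concentration event likely for each iteration. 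Given the concentration, every slot has interference $\leq \chi$ and thus at most $m\chi$ packets, so the call to $\mathcal{A}(\chi, m\chi)$ is valid and completes in $f(m\chi)\chi$ steps with probability $\geq 1-1/(m\chi)$.

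Both kinds of failure events, the per-edge Chernoff failures and the per-slot $\mathcal{A}$-failures, must be union-bounded over all $O(\sum_i d_i)=O(I/\chi+\xi)$ slots and all $\xi$ iterations. Because the number of slots can be polynomially large in $n$, I would strengthen the per-call success probabilities by choosing the constant in $\chi$ large enough in $\phi$ and, if needed, repeating $\mathcal{A}$ a constant number of times with independent randomness (each repetition still fitting within $f(m\chi)\chi$ up to constants). This yields overall success of the first for loop with probability $\geq 1-1/(2n^\phi)$; combined with the $1/n^{\phi+1}$ slack for the second loop this gives the desired $1/n^\phi$ total failure probability.

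For the runtime, the first for loop uses at most $f(m\chi)\chi\sum_{i=1}^\xi d_i \leq f(m\chi)\chi(2I/\chi + \xi) = 2f(m\chi)I + O(\xi\cdot f(m\chi)\chi)$, which becomes $2f(m\chi)I + O(\log n\cdot f(m\chi))$ after using $\xi\chi = O(\log n\cdot\log m)$ and absorbing the extra $\log m$ into the algorithmic factor $f(m\chi)$; the second for loop uses $(\lceil\phi\rceil+1)\cdot f(n)\cdot 2\phi\chi\log n = O(f(n)\log n\log m)$ steps. The main obstacle I anticipate is the probability bookkeeping: because the number of $\mathcal{A}$-calls is polynomial in $n$, one must tune $\chi$ and possibly repeat $\mathcal{A}$ so that the Chernoff and $\mathcal{A}$-failure probabilities sum to $\leq 1/n^\phi$ after union bounding, while keeping these boostings hidden inside the constants of the final time bound.
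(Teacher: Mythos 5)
Your induction-on-$i$ framework, the time accounting for both loops, and the final union over the last $\lceil\phi\rceil+1$ executions all match the paper. But the central probabilistic step does not go through. You require that in every one of the $\psi_i=\lceil 2^{-i+1}I/\chi\rceil$ delay slots of iteration $i$, the realized interference stays below $\chi$ \emph{and} the call to $\mathcal{A}(\chi,m\chi)$ succeeds. The per-slot Chernoff failure is $m^{-\Omega(1)}$ and the per-call failure of $\mathcal{A}$ is $1/(m\chi)$ — both depend only on $m$, not on $n$. The number of slots $\sum_i\psi_i=\Theta(I/\chi)$ can be polynomial in $n$ (indeed $I$ can be as large as $nm$), so union-bounding over all slots leaves a failure probability that does not shrink with $n$. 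Your fix — boosting the constant in $\chi$ or repeating $\mathcal{A}$ a constant number of times — cannot help: to push the per-slot failure below $1/\mathrm{poly}(n)$ you would need $\chi=\Omega(\log n)$ or $\Omega(\log n/\log m)$ repetitions, either of which inflates the leading term to $\Omega(f(m\chi)\cdot I\cdot\log n)$ and breaks the claimed bound.

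The paper's proof sidesteps exactly this obstacle by \emph{not} demanding that every slot succeed. Instead it lets slots fail and controls the aggregate damage: with $Y_j$ capturing slots where the realized interference exceeds $\chi$ (scaled by how much) and $Z_j$ capturing slots where $\mathcal{A}$ fails despite acceptable interference, it writes $\lVert W\cdot R^{(i)}\rVert_\infty \leq \sum_j Y_j\chi + \sum_j Z_j\chi$ and shows $\sum_j(Y_j+Z_j)\leq\psi_i/2$ with probability $1-2\cdot 2^{-\psi_i/4}$. The $Z_j$'s are independent Bernoullis with mean $\leq 1/(8\e)$, so a Chernoff bound on their sum suffices; the $Y_j$'s are \emph{not} independent, which is handled by an exponential-moment bound $\Ex{\e^{4Y_j}}\leq 2$ combined with the FKG inequality (the negative association between delay slots) to get $\Ex{\prod_j\e^{4Y_j}}\leq\prod_j\Ex{\e^{4Y_j}}$. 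Because $\psi_i\geq 4(\xi-i+2\phi\log n)$, the per-iteration failure probabilities form a geometric series summing to $O(n^{-\phi})$. If you revise, replace "every slot succeeds" with this aggregate accounting, and note that the dependence among slots is precisely why the MGF$+$FKG machinery, rather than an independence-based Chernoff bound, is needed for the $Y_j$'s.
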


\begin{proof}
The number of time slots Algorithm~\ref{alg:largepacketnumbers} is simply the sum of the numbers of time slots used by all iterations of $\mathcal{A}$  
\begin{align*}
& \sum_{i=1}^\xi \left\lceil \frac{2^{-i + 1} \cdot I}{\chi} \right\rceil \cdot f(m \chi) \cdot \chi + (\lceil\phi\rceil + 1) \cdot f(n) \cdot 2 \phi \chi \log n \\ 
\leq\; & 2 \cdot f(m \chi) \cdot I + \xi \cdot f(m \chi) \cdot \chi + (\lceil\phi\rceil + 1) \cdot f(n) \cdot 2 \phi \chi \log n \\ 
=\; & 2 \cdot f(m \chi) \cdot I + O(\log n \cdot f(m \chi) + f(n) \log n \cdot \log m ) \enspace.
\end{align*}

In order to bound the success probability, let $R^{(i)}(e)$ be the number of remaining packets for edge $e$ after the $i$th iteration of the for loop, $R^{(0)} = R$. We claim that with probability at least $1 - \nicefrac{1}{n^\phi}$ all of the following events occur:
\begin{itemize}
  \item For all $i \in [\xi]$, we have $\lVert W \cdot R^{(i)} \rVert_\infty \leq 2^{-i} \cdot I$
  \item All remaining packets are successfully transmitted in one of the last $\phi \log n$ executions of $\mathcal{A}$. 
\end{itemize} 

In order to bound the probability of a failure, we consider the first event that does not occur. That is, let us assume that for $i$, we have $\lVert W \cdot R^{(i)} \rVert_\infty \leq 2^{-i} \cdot I$. Given this event, we now bound the probability that $\lVert W \cdot R^{(i+1)} \rVert_\infty \leq 2^{-i-1} \cdot I$.

\begin{claim}
For all $i \in [\xi]$
\[
\Pr{\lVert W \cdot R^{(i)} \rVert_\infty > 2^{-i} \cdot I \mid \lVert W \cdot R^{(i - 1)} \rVert_\infty \leq 2^{-(i-1)} \cdot I} \leq \frac{2^\xi}{2^i} \cdot \frac{1}{4n^\phi} \enspace.
\]
\end{claim}

\begin{proof}
Let $i \in [\xi]$ be fixed. The packets listed in $R^{(i)}$ have not been transmitted properly up to the $i$th iteration of the algorithm. In the $i$th iteration each packet from $R^{(i-1)}$ is assigned a delay value uniformly at random from the set $[\psi]$, where $\psi = \left\lceil \nicefrac{2^{-i + 1} \cdot I}{\chi} \right\rceil$.

Let $R^{(i - 1)}_j(e)$ be the number of packets having been assigned delay $j$ in the $i$th iteration.

A packet might not be successfully transmitted in the $i$th iteration for two reasons. Either we have $\lVert W \cdot R_j^{(i - 1)} \rVert_\infty > \chi$ or in spite of the fact that the interference measure was small enough the algorithm failed.

In the first case, we set $Y_j = \lVert W \cdot R_j^{(i - 1)} \rVert_\infty / \chi$, otherwise we set $Y_j = 0$. If in contrast, we had a failure of the algorithm, we set $Z_j = 1$ and otherwise $Z_j = 0$. These definitions yield $\lVert W \cdot R^{(i)} \rVert \leq \sum_{j = 1}^{\psi} Y_j \cdot \chi + \sum_{j = 1}^{\psi} Z_j \cdot \chi$.

Let us first consider the random variables $Y_j$. For fixed $j \in [\psi]$ and $e \in E$ the random variable $(W \cdot R_j^{(i+1)})(e)$ is the weighted sum of independent random variables because for each packet the delay is chosen independently. For this reason, we can apply a Chernoff bound on $(W \cdot R_j^{(i)})(e)$. For the expectation, we have
\[
\Ex{(W \cdot R_j^{(i-1)})(e)} \leq \frac{\lVert W \cdot R^{(i-1)} \rVert_\infty}{\psi} \leq \frac{\chi}{2} \enspace.
\]
Chernoff and union bound now yield that for all $\delta \geq 1$, we have
\[
\Pr{\lVert W \cdot R_j^{(i-1)} \rVert_\infty \geq (1 + \delta) \cdot \frac{\chi}{2}} \leq \sum_{e \in E} \Pr{(W \cdot R_j^{(i-1)})(e) \geq (1 + \delta) \cdot \frac{\chi}{2}} \leq m \cdot \exp\left( - \frac{\delta \chi}{6} \right) \enspace. 
\]
In terms of $Y_j$ this is for all $\alpha \geq 1$ 
\begin{align*}
\Pr{Y_j \geq \alpha} & \leq m \cdot \exp\left( - \frac{(2 \alpha - 1) \chi}{6} \right) \leq m \cdot \exp\left( - \frac{\chi}{6} \alpha \right) \\
& \leq m \cdot \exp\left( - (\ln m + 8 + \ln 2) \alpha \right) \leq \exp(- 8 \alpha) \cdot 2^{-\alpha} \enspace,
\end{align*}
which yields
\[
\exp(4\alpha + 4) \cdot \Pr{Y_j \geq \alpha} \leq \frac{1}{2^\alpha} \enspace.
\]
Thus, we can conclude for $\Ex{\e^{4 \cdot Y_j}}$
\[
\Ex{\e^{4 \cdot Y_j}} \leq 1 + \sum_{a = 2}^\infty \exp(4a) \cdot \Pr{Y_j \geq a-1} = 1 + \sum_{\alpha=1}^\infty \exp(4\alpha + 4) \cdot \Pr{Y_j \geq \alpha} \leq 1 + \sum_{\alpha=1}^\infty \frac{1}{2^\alpha} = 2 \enspace.
\]

The random variable $Y_j$ are not independent. Nevertheless, we have $\Ex{\prod_{j=1}^{\psi} \e^{4 \cdot Y_j}} \leq \prod_{j=1}^{\psi} \Ex{\e^{4 \cdot Y_j}}$. This is due to the fact that $\e^{Y_1}$ is non-decreasing in $Y_1$ and $\Ex{\prod_{j \neq 1} \e^{Y_j} \mid Y_1}$ is non-increasing in $Y_1$ because larger values of $Y_1$ only reduce the probability that there are many packets with delay $2, \ldots, \psi$. The FKG inequality now yields $\Ex{\e^{Y_1} \cdot \Ex{\prod_{j \neq 1} \e^{Y_j} \mid Y_1}} \leq \Ex{\e^{Y_1}} \cdot \Ex{\Ex{\prod_{j \neq 1} \e^{Y_j} \mid Y_1}} = \Ex{\e^{Y_1}} \cdot \Ex{\prod_{j \neq 1} \e^{Y_j}}$. Applying this argument repeatedly yields the claim.

For this reason, we get
\[
\Pr{\sum_{j=1}^{\psi} Y_j \geq \frac{\psi}{4}} = \Pr{\e^{4 \sum_{j=1}^{\psi} Y_j} \geq \e^{\psi}} \leq \e^{-\psi} \prod_{j=1}^{\psi} \Ex{\e^{4 \cdot Y_j}} \leq \e^{-\psi} \cdot 2^{\psi} \leq 2^{- \frac{\psi}{4}} \enspace.
\]

The random variables $Z_j$ are independent $0$/$1$ random variables. For each of them, we have $\Ex{Z_j} \leq \frac{1}{8 \e}$, and therefore for the sum $\Ex{\sum_{j=1}^{\psi} Z_j} \leq \frac{\psi}{8 \e}$. Applying a Chernoff bound yields
\[
\Pr{\sum_{j = 1}^{\psi} Z_j \geq \frac{\psi}{4}} \leq 2^{- \frac{\psi}{4}} \enspace.
\]

Thus, we get
\[
\Pr{\lVert W \cdot R^{(i)} \rVert_\infty > 2^{-i} \cdot I \mid \lVert W \cdot R^{(i - 1)} \rVert_\infty \leq 2^{-(i-1)} \cdot I} \leq \Pr{\sum_{j=1}^{\psi} Z_j + \sum_{j=1}^{\psi} Y_j \geq \frac{\psi}{2}} \leq 2 \cdot 2^{-\frac{\psi}{4}} \enspace.
\]
Furthermore, we have
\[
\psi = \left\lceil \frac{2^{-i + 1} \cdot I}{\chi} \right\rceil \geq 4 \left( \xi - i + 2 \phi \log n \right) \enspace.
\]
For this reason
\[
2^{-\frac{\psi}{4}} \leq 2^{\xi - i + 2 \phi \log n} \leq \frac{2^\xi}{2^i} \cdot \frac{1}{4n^\phi} \enspace.
\]
The completes the proof of the claim.
\end{proof}

Having shown this bound for each iteration, we can now take the sum over all $i \in [\xi]$ to get
\[
\sum_{i=1}^{\xi} \Pr{\lVert W \cdot R^{(i)} \rVert_\infty > 2^{-i} \cdot I \mid \lVert W \cdot R^{(i - 1)} \rVert_\infty \leq 2^{-(i-1)} \cdot I} \leq \sum_{i=1}^{\xi} \frac{2^\xi}{2^i} \cdot \frac{1}{4n^\phi} \leq 2 \cdot \frac{1}{4n^\phi} \enspace.
\]

Now let us consider the last $\lceil \phi \rceil + 1$ executions of $\mathcal{A}$. Provided that $\lVert W \cdot R^{(\xi)} \rVert_\infty \leq 2^{-\xi} \cdot I \leq 2 \phi \chi \log n$, by our assumption the probability that not all packets are successfully transmitted in one execution is at most $\nicefrac{1}{n}$. Having $\lceil \phi \rceil + 1$ independent repeats, this failure probability reduces to $\nicefrac{1}{2 n^\phi}$. Taking another union bound, this shows that the combined failure probability is at most $\nicefrac{1}{n^\phi}$.
\end{proof}

\section{Dynamic Scheduling Protocol for Stochastic Injection}
We are now prepared to transform the static algorithm into a protocol for dynamic packet injection. In this section, we consider the stochastic injection. In the next section, the results are transferred to the adversarial injection model. The assumption we make is that there is some algorithm $\mathcal{A}(I, n)$. Given at most $n$ communication requests of interference measure at most $I$, it computes a schedule of length $f(m) \cdot I + g(m, n)$ with probability at least $1 - \nicefrac{1}{2 n^4}$. Here, $f$ is a function independent of $n$, and $g$ is a function growing sublinearily in $n$. Given this algorithm, we build a stable protocol for each injection rate $\lambda < \frac{1}{f(m)}$.

Let $\lambda = (1 - \varepsilon) / f(m)$. Without loss of generality, we assume that $\varepsilon \leq \nicefrac{1}{2}$. We divide time into frames of length $T$. We require that $T \geq \frac{100 f(m)}{\varepsilon^3} + \frac{48 f(m) \ln m}{\varepsilon^2}$ and furthermore that $T \geq \frac{4 f(m)}{\varepsilon^2} g(m, \frac{m}{f(m)} \cdot T)$. The latter condition is fulfilled for sufficiently large $T$ because $g(m, n)$ grows sublinearily in $n$. For example, if $f(m) = O(\log m)$ and $g(m, n) = O(\log n \cdot \log^2 m + \log^2 n \cdot \log m)$, as derived in the previous section, it suffices to have $T = O(\frac{\log^4 m}{\varepsilon^2})$. Furthermore, we set $J = (1+\varepsilon) \cdot \lambda \cdot T$.

Each time frame of length $T$ itself consists of two phases. The first phase is dedicated to \emph{unfailed packets}. Each packet is intended to make one hop towards its final destination in each phase. In order to achieve this goal, after injection a packet waits for the next time frame to begin. Here, $\mathcal{A}(J, m \cdot J)$ is executed for $T' = f(m) \cdot J + g(m, m \cdot J)$ time slots on the set containing the respective next hop on the path of each packet that has not failed so far. In this execution packets can fail to reach their next hop destination. If this happens, a packet is referred to as \emph{failed} and will from now on be only scheduled for transmission in the second phase, the \emph{clean-up phase}. The clean-up phase consists of the remaining $T - T'$ time slots of the time frame. Here, the algorithm is executed  another time but only on the following set of packets. Each edge $e$ with a non-empty buffer of failed packets performs a random experiment. With probability $\nicefrac{1}{m}$ it selects the failed packet from its buffer whose failure is longest ago. With the remaining probability no packet from the buffer of failed packets on this edge is selected in this round. On the selected packets, we execute $\mathcal{A}(1, m \cdot J)$ for $f(m) \cdot 1 + g(m, m \cdot J)$ time steps. If $T$ fulfills the bounds mentioned before, these are at most $T - T'$ steps. So both phases fit into a time frame.

In order to prove stability, we have to consider the failed packets. Each packet that does not fail will reach its final destination after at most $D$ time frames. The central question is therefore whether the clean-up phases are able to keep the buffers of failed packets small. In the following two sections, we show that both queue lengths and packet latency are bounded in expectation, proving the protocol to be stable. 

\subsection{Queue Lengths}
In order to show the stability of the protocol, we show in this section that in expectation queue lengths are bounded. As previously stated, it suffices to bound the lengths of buffers for failed packets. Packets that do not fail spend at most $D + 1$ time frames in the system. Having a bounded (expected) number of packets injected per time step, they do not have to be considered anymore.

\begin{theorem}
The expected queue lengths (i.e. number of undelivered packets) are bounded at any time.
\end{theorem}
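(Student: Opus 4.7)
I would split the undelivered packets at any frame boundary into \emph{unfailed} packets, which are still making one hop per frame through Phase~1, and \emph{failed} packets sitting in the clean-up buffers $Q_t(e)$. Unfailed packets are easy: each leaves the system within $D+1$ frames, and the stochastic model guarantees an $O(1)$ expected number of packets injected per time slot, so the expected number of unfailed packets is uniformly $O(D \lambda T)$. The real work is to bound $\Ex{\Phi_t}$ uniformly in $t$, where $\Phi_t=\sum_e Q_t(e)$.

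I would apply a Foster--Lyapunov-type drift argument on the Markov chain whose state is the buffer vector $(Q_t(e))_{e\in E}$: it suffices to exhibit some $c>0$ and some threshold $\Phi^\star$ such that $\Ex{\Phi_{t+1}-\Phi_t\mid \Phi_t}\le -c$ whenever $\Phi_t\ge\Phi^\star$. Standard drift/Foster--Lyapunov results then yield $\sup_t\Ex{\Phi_t}<\infty$. The drift is the expected number of newly failed packets per frame minus the expected number of packets cleared per frame, and I would bound these two quantities separately.

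For the \emph{arrival} side, a packet becomes failed in frame $t$ only if Phase~1 of that frame itself fails. This can happen for two reasons: (a)~the interference measure of the packets handed to Phase~1 exceeds $J=(1+\varepsilon)\lambda T$, or (b)~$\mathcal{A}(J,mJ)$ does not finish within its allotted $T'$ steps. A Chernoff bound on the stochastic injection, using the multiplicative slack between $\lambda T$ and $J$, makes event~(a) occur with probability $\exp(-\Omega(\varepsilon^2 T/f(m)))$, and the assumed guarantee on $\mathcal{A}$ makes event~(b) occur with probability at most $1/(2(mJ)^4)$. Since a single Phase~1 execution involves at most $mJ$ packets, the expected number of new failures per frame is at most some $\delta(T)$ which the explicit lower bounds on $T$ in the protocol force to be much smaller than $1/m$. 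For the \emph{departure} side, each non-empty edge independently selects its head-of-line failed packet with probability $1/m$, so if $k$ denotes the number of non-empty edges the expected number of selected packets is $k/m\le 1$; applying a Chernoff-plus-union-bound argument to the weighted sums $(WX)(e)$ shows that with constant probability the selected set has interference measure at most~$1$, and conditioned on that $\mathcal{A}(1,mJ)$ drains the set with probability $1-1/(2(mJ)^4)$. Hence whenever $\Phi_t>0$, at least $\Omega(1/m)$ packets are cleared in expectation, which exceeds $\delta(T)$ and gives the desired negative drift.

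The main obstacle will be the clean-up-phase departure estimate: the matrix $W$ is essentially arbitrary and the number of non-empty edges can be as large as $|E|$, so it is not a priori obvious that independently picking each non-empty edge with probability $1/m$ produces a set of interference measure at most~$1$ with adequate probability---this is where I expect the most delicate concentration argument to live. A secondary subtlety is that, although Phase~1 failures are rare, each failure can add up to $\Theta(mJ)$ packets to the buffers at once; the negative drift therefore forces exactly the kind of quantitative lower bound on $T$ stated before the theorem, namely one strong enough that the Phase~1 failure probability times $mJ$ is still dominated by the $\Omega(1/m)$ clean-up rate.
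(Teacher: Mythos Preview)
Your overall strategy---split into unfailed and failed packets, then establish negative drift on a potential measuring the failed backlog---matches the paper's. The genuine gap is your choice of potential. With $\Phi_t=\sum_e Q_t(e)$ equal to the \emph{number} of failed packets in the system, a successful clean-up transmission moves a failed packet from the buffer at edge $e$ to the buffer at its next-hop edge $e'$; unless that was the packet's last hop, $\Phi_t$ does not change. So ``at least $\Omega(1/m)$ packets are cleared in expectation'' does not follow from your clean-up analysis: what you can show is that $\Omega(1/m)$ \emph{hops} are completed per frame, not that $\Omega(1/m)$ packets leave the system. The paper avoids this by taking $\Phi$ to be the sum over all failed packets of their number of \emph{remaining hops}; then every successful clean-up transmission decreases $\Phi$ by exactly one, and the simple event ``some nonempty edge fires and all others stay silent'' already gives a decrease probability of at least $1/(2\e m)$---which also dispenses with the concentration argument on $(WX)(e)$ that you flagged as the main obstacle.

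Two smaller points. First, ``a single Phase~1 execution involves at most $mJ$ packets'' holds only when $I\le J$, i.e.\ precisely when event~(a) does \emph{not} occur; under event~(a) the number of new failures is $\Theta(mI)$ with $I$ unbounded, so the expected increase requires an integrated tail bound on $I$, not a flat $mJ$ multiplier. The paper handles this via the geometric tail $\Pr{I\ge (1+i)J}\le (mJ)^{-4-i}$. Second, bare negative drift does not yield $\sup_t\Ex{\Phi_t}<\infty$ when the upward jumps are unbounded; you would still need a moment condition on the increments. The paper instead proves a direct inductive tail bound $\Pr{\Phi_t\ge k}\le (1-1/(m^2 J))^k$ by stochastically dominating the one-step change, which is both cleaner here and reused verbatim in the subsequent latency proof.
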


To prove the theorem, we consider as a potential function $\Phi$ the sum of the numbers of remaining hops all failed packets have to cross. In a clean-up phase this quantity reduces if a transmission is successfully carried out. Obviously, this potential function is an upper bound on the summed buffer sizes as well.

First, we bound the increase of the potential function in a time frame. This is due to colliding packets. The increase may depend on the previous value of the potential function. For example, if all packets collided in the previous time frames, collisions are less likely. Fortunately, we can use the following pessimistic assumption. The probability of a collision is maximal (and therefore the potential increase) if no packets have collided before. Therefore we will assume for the bound on the potential increase that all injected packets have reached the current time frame without failing. This may yield that we account for failed packets multiple times: We add its contribution to the potential function in each time frame it would fail. However, this assumption allows us to treat the potential like a Markov chain.

\begin{lemma}
\label{lemma:potentialincrease}
For each $i \in \NN$ the probability that the potential increases by at least $i \cdot m^2 J + 1$ is at most $\left( m J \right)^{- 4 - i}$.
\end{lemma}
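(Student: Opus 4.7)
The plan is to bound the potential increase in a frame by the realized interference measure $I$ of that frame, then apply a Chernoff bound to $I$. In a single frame, every failed packet contributes at most $D \leq m$ to $\Phi$, and the number of packets handed to $\mathcal{A}$ in the frame is at most $\sum_e R(e) \leq m \cdot \max_e R(e) \leq m I$, where $I := \lVert W \cdot R \rVert_\infty$ and $R$ is the next-hop vector under the pessimistic reinjection convention (using $W_{e,e'} \in [0,1]$ and $W_{e,e} = 1$). Combining, $\Delta\Phi \leq m \cdot X \leq m^2 I$, where $X$ counts the packets failing in the first phase. In particular, for every $i \geq 1$ the event $\Delta\Phi \geq i m^2 J + 1$ forces $I > iJ$.

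For the case $i \geq 1$, I would bound $\Pr{I > iJ}$ directly by Chernoff. Each $(W \cdot R)(e)$ is a sum of independent Bernoullis weighted by entries of $W$ in $[0,1]$, with $\Ex{(W \cdot R)(e)} = T \cdot (WF)(e) \leq \lambda T = J/(1+\varepsilon)$. For $i = 1$ the deviation parameter is only $\delta = \varepsilon$, so I would use the small-$\delta$ form $\exp(-\delta^2 \mu/3)$; combined with $T \geq 48 f(m)\ln m/\varepsilon^2$, the exponent $\varepsilon^2 \lambda T/3$ is a large multiple of $\ln m$. For $i \geq 2$ the deviation parameter satisfies $\delta \geq 1$, so the large-$\delta$ form $\exp(-\delta \mu / 3)$ yields an exponent of size $\Omega(i \lambda T)$, which is a large multiple of $i \ln m$. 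After a union bound over the $m$ edges, each of these tails is at most $(mJ)^{-4-i}$, provided $\ln J = O(\ln m)$, which holds because $T$ (hence $J$) is polynomially bounded in $m$ under any reasonable choice of $f$ and $g$.

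For the case $i = 0$, the event $\Delta\Phi \geq 1$ just says that at least one packet fails. I would split on $\{I \leq J\}$: under this event the preconditions of $\mathcal{A}(J, mJ)$ are met (the total packet count is $\leq m I \leq m J$), so every packet is delivered except with probability $\leq 1/(2(mJ)^4)$ by the hypothesis on $\mathcal{A}$; the complement $\{I > J\}$ is handled by the $i = 1$ analysis above with room to spare. The two contributions sum to at most $(mJ)^{-4}$.

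The main obstacle is arranging the Chernoff constants so that a single estimate covers both the $i = 1$ regime (small $\delta$, exponent scaling like $\varepsilon^2 \lambda T$) and the $i \geq 2$ regime (large $\delta$, exponent scaling like $i \lambda T$), and carrying through a $\ln(mJ)$ target rather than just $\ln m$. The frame-length hypothesis $T \geq 100 f(m)/\varepsilon^3 + 48 f(m)\ln m/\varepsilon^2$ is engineered for exactly this: the $\varepsilon^{-3}$ term supplies the $1/\varepsilon$ slack needed for the $i = 1$ tail, while the $\ln m / \varepsilon^2$ term absorbs both the union bound over edges and the $\ln J$ overhead. A small but genuine subtlety is justifying the pessimistic reinjection convention; this just uses that inflating the active-packet vector can only stochastically increase $I$ and hence $X$, so the upper tail of $\Delta\Phi$ under the true dynamics is dominated by the one we analyze.
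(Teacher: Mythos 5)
Your decomposition matches the paper's: bound $\Delta\Phi \leq D\lvert E\rvert I \leq m^2 I$ (using $W_{e,e}=1$ so $R(e)\leq I$, and $D\lvert E\rvert\leq m^2$), split $i=0$ into ``overload $I>J$'' versus ``internal failure of $\mathcal{A}$ with $I\leq J$'', and Chernoff the overload tail with a union bound over the $m$ edges; the pessimistic-reinjection convention you flag at the end is exactly the paper's ``assume no prior failures'' device. The genuine gap is in how you dispose of the $\ln J$ overhead. You assert the tails beat $(mJ)^{-4-i}$ ``provided $\ln J = O(\ln m)$, which holds because $T$ (hence $J$) is polynomially bounded in $m$.'' That is not a hypothesis of the lemma: $T$ only has lower bounds $T \geq \frac{100 f(m)}{\varepsilon^3} + \frac{48 f(m)\ln m}{\varepsilon^2}$ and the $g$-constraint, and $\varepsilon$ may be arbitrarily small relative to $m$, making $J\approx\lambda T$ arbitrarily large. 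The paper closes this without any polynomial assumption by a self-referential bootstrap: it sets $A = 5 / \bigl(-\ln\frac{\e^\varepsilon}{(1+\varepsilon)^{1+\varepsilon}}\bigr)$, uses $x \geq x/2 + \ln x$ to write $\frac{\lambda T}{A} \geq \frac{\lambda T}{2A} + \ln\!\left(\frac{\lambda T}{A}\right)$, and then plugs the hypothesis $\lambda T \geq 2A\ln(Am(1+\varepsilon))$ into the first term to get $\frac{\lambda T}{A} \geq \ln(Am(1+\varepsilon)) + \ln(\lambda T/A) = \ln(mJ)$, so the $\ln(\lambda T)$ overhead is absorbed automatically. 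Your intuition that the $\varepsilon^{-3}$ term supplies the slack is exactly right; the bootstrap is the mechanism that turns that slack into the needed $\ln(mJ)$ bound, and you should replace the polynomial-boundedness assertion by it.

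One further delicacy worth being honest about: your implication for $i\geq 1$ is $\Delta\Phi \geq i m^2 J + 1 \Rightarrow I > iJ$, which is the correct consequence of $\Delta\Phi\leq m^2 I$. For $i=1$ that event is $I>J$, the same small-deviation ($\delta=\varepsilon$) tail as the $i=0$ branch, and the Chernoff computation (after the union bound over $m$ edges, and with the bootstrap above) gives $m(mJ)^{-5}$, which the paper records as $\leq \frac{1}{2}(mJ)^{-4}$ --- an order of $m$ short of the claimed $(mJ)^{-5}$. The paper's $i>0$ text instead works with $\Pr{I \geq (1+i)J}$, which makes the deviation large ($\delta\geq 1+2\varepsilon$) for all $i\geq 1$, but this event is not what $\Delta\Phi\leq m^2 I$ actually delivers, so there is a quiet off-by-one there too. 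You should either sharpen the $\Delta\Phi$-to-$I$ comparison or accept a slightly weaker tail exponent for $i=1$; the downstream Markov-chain lemma only needs geometric decay, so the overall theorem is unaffected.
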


\begin{proof}
Let $I$ be the interference measure of all transmission requests that were originally meant to be served in this phase. As we have $W_{e, e} = 1$ for all $e \in E$ and path lengths are at most $D$, the potential increase in case of a failure can be bounded by $D \cdot \lvert E \rvert \cdot I \leq m^2 I$. We now distinguish between the two cases $i=0$ and $i > 0$.

\begin{claim}
\label{claim:potentialincrease:chernoff}
For all $\delta > 0$, we have $\Pr{ I \geq (1 + \delta) \lambda T} \leq m \cdot \left( \frac{\e^\delta}{(1 + \delta)^{1 + \delta}} \right)^{\lambda T}$.
\end{claim}

\begin{proof}
We denote the packets that are injected in time step $t$ as follows. The variable $X^{g, e}_{t, d}$ indicates if generator $g$ injects a packet having edge $e$ as the $d$th hop in time step $t$. The $X^{g, e}_{t, d}$ are random variables having the following three properties. 
\begin{itemize}
  \item The injection in each time step is identically distributed. That is, for any pair $t_1$ and $t_2$ the random variables $X_{t_1, d}^{g, e}$ and $X_{t_2, d}^{g, e}$ have to be identically distributed for all $g$, $e$ and $d$.
 \item The injection of different generators and in different has to be independent. This can be formalized by stating independence for each subset $S$ of these random variables with the following property. If $X_{t, d}^{g, e}, X_{t', d'}^{g', e'} \in S$ and $X_{t, d}^{g, e} \neq X_{t', d'}^{g', e'}$, then we have $t \neq t'$ or $g \neq g'$.
 \item Each generator only injects a single packet per time slot. That is for any fixed $t$, $d$ and $g$ only one of the $X_{t, d}^{g, e}$ can be 1.
\end{itemize}

Let $R(e)$ be the number of packets attempting to be transmitted via $e$ in the time frame of consideration. Applying the above notation, we have $R(e) = \sum_g \sum_{j=1}^d \sum_{t=t_0 - (j+1)T - 1}^{t_0 - jT}  X_{t_0, j}^{g, e}$, where $t_0$ is the first time slot of the currently considered time frame. For this reason $(W \cdot R)(e)$ is a weighted sum of independent random variables. Its expectation is
\[
\Ex{(W \cdot R)(e)} = \Ex{\sum_{e' \in E} W_{e, e'} \sum_g \sum_{j=1}^D \sum_{t=t_0 - (j+1)T - 1}^{t_0 - jT} X_{t_0, j}^{g, e'}}
= \sum_{e' \in E} W_{e, e'} \sum_g \sum_{j=1}^D \sum_{t=t_0 - (j+1)T - 1}^{t_0 - jT}  \Ex{X_{t_0, j}^{g, e'}} \enspace.
\]
The injection is time-invariant. So this is equal to
\[
\sum_{e' \in E} W_{e, e'} \sum_g \sum_{j=1}^D T \Ex{X_{t_0, j}^{g, e'}} \leq \lambda T \enspace.
\]
Applying union bound and Chernoff bound, we get
\[
\Pr{ I \geq (1 + \delta) \lambda T} \leq \sum_{e \in E} \Pr{ (W \cdot R)(e) \geq (1 + \delta) \lambda T } \leq m \cdot \left( \frac{\e^\delta}{(1 + \delta)^{1 + \delta}} \right)^{\lambda T} \enspace. 
\]
\end{proof}

For $i = 0$, we have to bound the probability that a packet fails. As mentioned earlier there are two possible reasons for this event to occur. On the one hand, the network is overloaded in the time frame, that is $I > J$. In order to apply Claim~\ref{claim:potentialincrease:chernoff}, we use the fact that $T \geq \frac{100 f(m)}{\varepsilon^3} + \frac{48 f(m) \ln m}{\varepsilon^2}$ to get $\lambda T \geq \frac{50}{\varepsilon^3} + \frac{24 \ln m}{\varepsilon^2}$. This implies
\[
\lambda T \geq 2 A \ln\left(A m (1 + \varepsilon)\right) \enspace, \quad \text{where $A = \frac{5}{- \ln\left( \frac{\e^\varepsilon}{(1+\varepsilon)^{1+\varepsilon}} \right)}$} \enspace.
\]
As for all $x \geq 0$, we have $x \geq \frac{x}{2} + \ln(x)$, this yields
\[
\frac{\lambda T}{A} \geq \frac{\lambda T}{2A} + \ln\left(\frac{\lambda T}{A}\right) \geq \ln\left(A m (1 + \varepsilon)\right) + \ln\left(\frac{\lambda T}{A}\right) = \ln\left(m (1 + \varepsilon) \lambda T\right) = \ln\left(m J\right) \enspace.
\]
Putting this into the bound on $\Pr{I \geq J}$ obtained by Claim~\ref{claim:potentialincrease:chernoff}, we get
\[
\Pr{ I \geq J } = \Pr{ I \geq (1+\varepsilon) \lambda T} \leq m \left( \frac{e^\varepsilon}{(1+\varepsilon)^{1+\varepsilon}}\right)^{A \ln(m J)} = m \left( \frac{1}{m J} \right)^5 = \frac{1}{2 (m J)^4} \enspace.
\]

Still in the case $I \leq J$ packets may fail. This is due to the fact that internal randomization of the algorithm can result in failures. We required the algorithm to have at failure probability of at most $\nicefrac{1}{2 (m J)^4}$ in this case. Combining the two bounds, this shows the claim for the case $i = 0$.

Considering the case $i > 0$, we apply again the fact that $T \geq \frac{100 f(m)}{\varepsilon^3} + \frac{48 f(m) \ln m}{\varepsilon^2}$. This implies $T \geq \frac{100 + 30 \ln m}{\lambda}$. This is, we have
\[
\frac{\lambda T}{15} \geq \frac{\lambda T}{30} + \ln\left( \frac{\lambda T}{15} \right) \geq \ln(15 (1 + \varepsilon) m) + \ln\left( \frac{\lambda T}{15} \right) = \ln(m J) \enspace.
\]
Putting this into our bound by Claim~\ref{claim:potentialincrease:chernoff}, we get
\[
\Pr{ I \geq (1+i) J} \leq \Pr{ I \geq (1+i) \lambda T} \leq m \exp\left( - \frac{i \cdot \lambda T}{3} \right) \leq  m \left( \frac{1}{m J} \right)^{5 i} \leq  \left( \frac{1}{m J} \right)^{4 + i} \enspace. 
\]
\end{proof}

For the potential decrease in clean-up phases we use a very pessimistic but simple assumption. In the worst case all failed packets are in the same buffer. Even in this case, the potential decreases with probability at least $\nicefrac{1}{2 \e m}$.

\begin{lemma}
\label{lemma:potentialdecrease}
The probability that a non-zero potential decreases is at least $\nicefrac{1}{2 \e m}$.
\end{lemma}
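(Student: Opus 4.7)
Since the potential is non-zero, at least one edge has a non-empty buffer of failed packets; fix any such edge $e^*$. The plan is to lower bound the probability of the event ``exactly $e^*$ selects a packet in this clean-up phase, and the execution of $\mathcal{A}$ on that single packet succeeds.'' On this event, the oldest failed packet on $e^*$ advances one hop, which strictly reduces $\Phi$.

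First, $e^*$ selects its oldest failed packet with probability exactly $\nicefrac{1}{m}$ by construction. Conditioned on this, I would argue that with constant probability no other edge selects. Each of the remaining edges with a non-empty buffer selects independently (the random experiments are independent across edges), and each non-selection has probability $1 - \nicefrac{1}{m}$. Since there are at most $\lvert E \rvert - 1 \leq m - 1$ such other edges, the probability that none of them selects is at least $(1 - \nicefrac{1}{m})^{m-1} \geq \nicefrac{1}{\e}$.

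On the event that only $e^*$ contributes a packet to the execution, the chosen set contains exactly one transmission request, so its interference measure is $W_{e^*, e^*} = 1$. Hence $\mathcal{A}(1, m \cdot J)$ is run with its precondition satisfied, and by the assumed guarantee it succeeds with probability at least $1 - \nicefrac{1}{2(mJ)^4} \geq \nicefrac{1}{2}$ (and uses at most $f(m) + g(m, mJ) \leq T - T'$ time steps, which fits in the clean-up phase). Multiplying the three independent events yields success probability at least $\frac{1}{m} \cdot \frac{1}{\e} \cdot \frac{1}{2} = \frac{1}{2\e m}$, and the potential strictly decreases in that case.

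There is no real obstacle here; the only subtlety is the conditioning used to invoke $\mathcal{A}(1, m \cdot J)$ with the proper interference bound, which is why I insist that no edge other than $e^*$ picks a packet, even though a sharper argument could allow a few other selections as long as the total interference measure stays small.
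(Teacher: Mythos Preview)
Your argument is correct and matches the paper's proof essentially step for step: fix a non-empty buffer, it selects with probability $\nicefrac{1}{m}$, no other edge selects with probability at least $(1-\nicefrac{1}{m})^{m-1}\geq\nicefrac{1}{\e}$, and the algorithm succeeds with probability at least $\nicefrac{1}{2}$. You add a bit more justification (checking the interference measure of the singleton is $1$ and that the run fits in the clean-up phase), but the approach is the same.
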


\begin{proof}
Having non-zero potential, at least one buffer contains failed packets. For this reason, the probability that at least one packet is selected is at least $\nicefrac{1}{m}$. With probability at least $(1 - \nicefrac{1}{m})^{m-1} \geq \nicefrac{1}{\e}$ no other packet is selected. The success probability of the algorithm is at least $\nicefrac{1}{2}$. All events are independent.
\end{proof}

Combining these two bound we get the following facts on the Markov chain's drift, that is its expected change. The drift is finite for each state and in the case of non-zero potential it is negative. This already yields that the Markov chain is ergodic \cite{Pakes1969}. However, we can also bound the probability distribution quantitatively. 

\begin{lemma}
Given two independent non-negative integer random variables $\Phi$ and $\Delta$. The variable $\Delta$ is distributed as follows
$\Delta$ takes only values $-1$, $0$, $i \cdot H + 1$ with $\Pr{\Delta = -1} = 1$, $\Pr{\Delta = 0} = (1-a-q)$, $\Pr{\Delta = i \cdot H + 1} = \frac{a}{1 - b} \cdot b^i$, where $b \leq \frac{1}{8}$, $a \leq \frac{q}{4H}$, 

If for $\Phi$, we have $\Pr{\Phi \geq k} \leq \left( 1 - \frac{1}{H} \right)^k$ then this bound also holds for $\max\{ \Phi + \Delta, 0 \}$.
\end{lemma}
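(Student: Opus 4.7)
The plan is to reduce the tail bound on $\max\{\Phi + \Delta, 0\}$ to a moment-generating-function style inequality for $\Delta$ alone. For $k = 0$ the claim is trivial, so I would fix $k \geq 1$. Since $\Delta \geq -1$ and $\Phi \geq 0$, the event $\{\max(\Phi + \Delta, 0) \geq k\}$ with $k \geq 1$ coincides with $\{\Phi + \Delta \geq k\}$, and independence of $\Phi$ and $\Delta$ yields
\[
\Pr{\max\{\Phi + \Delta, 0\} \geq k} \;=\; \Ex{\Pr{\Phi \geq k - \Delta \mid \Delta}}.
\]
The key observation is that the hypothesised tail bound $\Pr{\Phi \geq j} \leq (1 - 1/H)^j$ automatically extends to every integer $j$: when $j \leq 0$ the right-hand side is already at least $1$, so the inequality is vacuous there. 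Setting $r = 1 - 1/H$ and $s = 1/r$, this lets me pull the conditioning out and obtain
\[
\Pr{\max\{\Phi + \Delta, 0\} \geq k} \;\leq\; r^k \cdot \Ex{s^{\Delta}},
\]
so it suffices to show $\Ex{s^\Delta} \leq 1$.

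To verify the latter I would plug in the three pieces of the distribution of $\Delta$ and sum the resulting geometric series in $b s^H$. Convergence needs $b s^H < 1$, which is where $b \leq 1/8$ enters: the function $s^H = (1 + 1/(H-1))^H$ is decreasing in integer $H \geq 2$ and hence bounded above by its value $4$ at $H = 2$, giving $b s^H \leq 1/2$ throughout. After collecting terms and using $1 - 1/s = 1/H$, the inequality $\Ex{s^\Delta} \leq 1$ reduces to
\[
a \left( \frac{s}{(1-b)(1 - b s^H)} - 1 \right) \;\leq\; \frac{q}{H}
\]
(or a structurally identical inequality, depending on the precise normalisation of the geometric weights). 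Using $s \leq 2$, $1 - b \geq 7/8$, and $1 - b s^H \geq 1/2$, the parenthesised factor on the left is bounded by an absolute constant strictly less than $4$, so the hypothesis $a \leq q/(4H)$ closes the bound with a little slack.

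The main obstacle is the second step: keeping the competing factors $1 - b$, $s$, and $1 - b s^H$ simultaneously under control and checking that the numerical constants $1/8$ and $1/4$ really do suffice uniformly over $H \geq 2$. A secondary subtlety, cleanly handled by the integer-extension trick in the first paragraph, is avoiding a case split on the sign of $k - \Delta$ when substituting the tail bound for $\Phi$; without that trick one would have to treat the summands with $iH + 1 \geq k$ separately, using $\Pr{\Phi \geq k - iH - 1} \leq 1$, which leads to the same estimate by a more cumbersome route.
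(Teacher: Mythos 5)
Your proposal is correct and follows essentially the same route as the paper: both expand the tail probability over the possible values of $\Delta$, substitute the geometric bound on $\Pr{\Phi \geq \cdot}$, and then sum the resulting geometric series using $b \leq 1/8$, $H \geq 2$ (so $bs^H \leq 1/2$, $s \leq 2$, $1-b \geq 7/8$) and finally $a \leq q/(4H)$ to absorb the slack. Your repackaging of the computation as ``show $\E[s^\Delta] \leq 1$'' is a mild cosmetic improvement, and your explicit observation that the tail bound extends trivially to non-positive exponents is a useful clarification of a step the paper applies silently when it substitutes $\Pr{\Phi \geq k - (iH+1)} \leq (1 - 1/H)^{k - (iH+1)}$ without restricting the range of $i$.
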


\begin{proof}
For $k = 0$ the bound trivially holds. So let us consider $k > 0$. Considering all possible values of $\Delta$, we have that $\Pr{ \max\{\Phi + \Delta, 0 \} \geq k}$ is
\[
\Pr{ \Delta = -1, \Phi \geq k + 1} + \Pr{ \Delta = 0, \Phi \geq k} + \sum_{i=0}^\infty \Pr{\Delta = i \cdot H + 1, \Phi \geq k - (i \cdot H + 1)}
\]
Using the definitions and the independence, this is at most
\[
q \cdot \left( 1 - \frac{1}{H} \right)^{k+1} + (1 - q - a) \left( 1 - \frac{1}{H} \right)^k + \sum_{i=0}^\infty \frac{a}{1 - b} b^i \cdot \left( 1 - \frac{1}{H}\right)^{k - (i \cdot H + 1)}
\]
We now apply the fact that $b \leq \nicefrac{1}{8}$ and $H \geq 2$. This yields
\[
\sum_{i = 0}^\infty b^i \cdot \left( 1 - \frac{1}{H} \right)^{- i \cdot H} \leq \sum_{i=0}^\infty (4b)^i \leq 2 \qquad\text{and} \qquad
\frac{1}{1-b} \left( 1 - \frac{1}{H}\right)^{- 1} \leq \frac{5}{2} \enspace.
\]
For this reason, the probability is at most
\[
\left( 1 - \frac{1}{H} \right)^k \left( q \cdot \left( 1 - \frac{1}{H} \right) + (1 - q - a) + 5 a \right)
\leq \left( 1 - \frac{1}{H} \right)^k \left( 1 - \frac{1}{H} q + 4 a \right) \enspace.
\]
As we have $a \leq \frac{q}{4 H}$, this is at most $\left( 1 - \frac{1}{H} \right)^k$.
\end{proof}

Lemmas~\ref{lemma:potentialincrease} and \ref{lemma:potentialdecrease} show that the potential change is stochastically dominated by $\Delta$ in the lemma when setting $H=m^2 J$, $a = (mJ)^{-4}$, $b = (mJ)^{-1}$, and $q = (2 \e m)^{-1}$. Thus the lemma shows that $\Pr{\Phi(t) \geq k} \leq \left( 1 - \frac{1}{m^2 J} \right)^k$ at any time $t$.

\subsection{Packet Latency}
Keeping the insights from the previous section in mind, we can now bound the expected time that a packet spends in the network between the time of injection and reaching its final destination (\emph{latency}). In particular, we show that for each packet with a path length $d$, the expected latency is $O(d \cdot T)$. That is, it takes $O(d)$ time frames.

\begin{theorem}\label{theorem:packetlatency}
The expected latency of a packet of path length $d$ is $O(d \cdot T)$. 
\end{theorem}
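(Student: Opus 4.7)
The plan is to split the latency of a packet $p$ of path length $d$ into three contributions: (i) the wait from injection until the next time frame begins, which is at most $T$; (ii) the time spent traversing hops in the \emph{unfailed} phases of successive time frames, where every hop takes exactly one frame and contributes at most $d \cdot T$; and (iii) any additional time $p$ spends in \emph{failed} buffers during clean-up phases. Contributions (i) and (ii) are deterministic and together already give $O(d)\cdot T$, so everything reduces to bounding (iii) in expectation.

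For contribution (iii), I would use the potential-function bound already established in the previous subsection, namely $\Pr{\Phi(t) \geq k} \leq \bigl(1 - \tfrac{1}{m^2 J}\bigr)^k$, which yields $\E{\Phi(t)} = O(m^2 J)$ at every time $t$. Once $p$ fails at some hop $j$, it thereafter advances via clean-up phases only; at each remaining edge $e_k$ on its path it enters the buffer in longest-failure-first order and must wait until all failed packets ahead of it have been selected and successfully transmitted. By Lemma~\ref{lemma:potentialdecrease}-type reasoning, any single non-empty buffer drains one packet in expected $O(m)$ frames (selection probability $1/m$ times success probability at least $1/2$), so the expected waiting time of $p$ at $e_k$ is $O(m)$ times the expected queue length seen upon arrival.

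To control that queue length I would use the crude but valid domination: the queue at any single edge is at most the total number of failed packets, which in turn is at most $\Phi$. This gives expected waiting time $O(m \cdot m^2 J) = O(m^3 J)$ frames per hop, and summing over the at most $d$ remaining hops yields an expected extra delay of $O(d)$ frames once the $\mathrm{poly}(m)$ factors (which are absorbed into the $O(\cdot)$-constant together with the $\varepsilon$-dependence) are hidden. Adding this to (i) and (ii) and converting frames to time slots gives the claimed $O(d \cdot T)$ bound.

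The main obstacle will be rigorously justifying the translation from the \emph{global} potential bound $\E{\Phi} = O(m^2 J)$ into a per-edge per-hop waiting-time bound, since queue lengths at different edges on $p$'s path are correlated with each other and with $p$'s own trajectory, and $p$ itself contributes to the potential it subsequently sees. I expect the cleanest way to handle this is via a tagging / Little's-law-style argument: stochastically bound the time a unit of potential created by $p$'s failure survives, using the $\Omega(1/m)$ negative drift on $\Phi$ from Lemmas~\ref{lemma:potentialincrease}--\ref{lemma:potentialdecrease}, and then note that $p$'s contribution to $\Phi$ must drain to zero before $p$ can leave the network. The tail bound on $\Phi$ is strong enough (exponential) that a simple domination replacing per-edge queues by $\Phi$ still yields a polynomial-in-$m$ constant, which is all that $O(d\cdot T)$ requires.
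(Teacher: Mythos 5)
Your high-level decomposition (wait for next frame, one frame per hop if the packet never fails, extra time if it fails) matches the paper's, but the way you treat contribution (iii) has a genuine gap: you bound the \emph{conditional} post-failure wait by $O(m^3 J)$ frames per hop, and then claim this becomes $O(d)$ frames overall ``once the poly$(m)$ factors are hidden.'' That arithmetic does not go through, because $J = (1+\varepsilon)\lambda T$ is \emph{not} a polynomial in $m$: it grows linearly in $T$. So your bound for (iii) is $O(d\,m^3 J)$ frames $= O(d\, m^3 J\, T)$ time steps $= \Theta(d\, T^2\,\mathrm{poly}(m))$ time steps, which has an extra factor of $T$ relative to the claimed $O(dT)$. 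The missing ingredient is that you never multiply by the \emph{probability} that the packet fails at hop $i$, which the paper shows is at most $(mJ)^{-4}$ per frame. It is precisely this tiny prefactor that cancels the $\mathrm{poly}(mJ)$ in the conditional wait and brings the contribution of (iii) down to $O(1)$ frames per hop (the paper's recursion yields $\Ex{X_1}\leq 3d$, hence $O(dT)$ time steps).

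The paper's proof proceeds by a backward recursion over hops and isolates exactly the term $\Pr{F=i}\,\Ex{X_i \mid F=i} \leq 2\e m\cdot\Pr{F=i}\,\Ex{\Phi_i \mid F=i}$. The correlation issue you flag (that the tracked packet's own failure can inflate the potential it subsequently sees, so you cannot just plug in $\Ex{\Phi_i}=O(m^2J)$) is real, and your proposed Little's-law/tagging fix is too vague to resolve it. The paper handles it concretely with a truncation argument: write $\Pr{F=i}\,\Ex{\Phi_i\mid F=i} = \sum_{k\ge 1}\Pr{\Phi_i\geq k,\,F=i}$, split the sum at $k=(mJ)^3$, bound the low tail by $(mJ)^3\cdot\Pr{F=i}\leq 1/(mJ)$, and the high tail by the unconditional tail $\Pr{\Phi_i\geq k}\leq (1-1/m^2J)^k$, also summing to $O(1/(mJ))$. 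That gives $\Pr{F=i}\,\Ex{\Phi_i\mid F=i}\leq 2/(mJ)$, and the recursion closes with an additive constant per hop. To repair your argument you would need to (a) reinstate the failure-probability factor, and (b) replace the hand-waved ``tagging'' step with this (or an equivalent) decoupling device; the global bound $\Ex{\Phi}=O(m^2J)$ alone is not enough because it is the \emph{product} $\Pr{F=i}\cdot\Ex{\Phi_i\mid F=i}$ that must be shown small, not each factor separately.
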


For packets that do not fail, this bound is trivial since they take one hop in each time frame. Therefore, it is crucial how much time it takes from the moment a packet fails until its delivery. Fortunately, this can be related to the potential after the time frame of failure.

\begin{observation}\label{observation:packetlatency:potential}
The expected remaining number of time frames a packet spends in the network between failure and reaching its destination is at most $2 \e m \Phi$, where $\Phi$ is the potential after the time frame of failure.
\end{observation}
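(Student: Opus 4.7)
The plan is to control the delivery time via the expected drift of the potential $\Phi_t$. Let $t_0$ be the time frame of failure, so $\Phi = \Phi(t_0)$ and the packet's remaining number of hops $d'$ satisfies $d' \leq \Phi$, since the packet's contribution to the potential at $t_0$ is exactly $d'$.

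I would first set up a pessimistic coupling in which no packet is injected after $t_0$. In that coupled process the only way the potential can grow is by re-failures of already-injected packets, which are already accounted for within $\Phi$ by the pessimistic bookkeeping used in the proof of Lemma~\ref{lemma:potentialincrease}. Hence $\Phi_t$ is effectively non-increasing, and by Lemma~\ref{lemma:potentialdecrease} it drops by at least one in each time frame with $\Phi_t > 0$ with probability at least $1/(2\e m)$. Summing at most $\Phi$ independent geometric waiting times with mean at most $2\e m$ then yields that the expected number of time frames until $\Phi_t$ hits $0$ is at most $2 \e m \Phi$. Once $\Phi_t = 0$ no packet is still failed along its path, so in particular our packet has completed all of its $d'$ remaining hops and reached its destination.

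It remains to transfer this bound to the actual protocol by showing the real delivery time of our packet is stochastically no larger than in the pessimistic process. This rests on two observations: the oldest-first discipline within each buffer ensures that any failure at our packet's current edge occurring after $t_0$ enters the queue behind our packet and cannot preempt it; and the $1/(2 \e m)$ lower bound in Lemma~\ref{lemma:potentialdecrease} is derived in the worst-case scenario where \emph{only} our packet is selected across the entire clean-up phase, so additional selections from other buffers arising from post-$t_0$ injections cannot lower the success probability. A routine induction on time frames then gives the stochastic dominance and hence the expected bound $2 \e m \Phi$. The main obstacle is exactly this transfer step, because the algorithm $\mathcal{A}$ is a global object whose success on our packet can in principle depend on all simultaneously selected packets; the saving grace is that the single-packet bound already used in the proof of Lemma~\ref{lemma:potentialdecrease} is conservative in the right direction, so no further analysis of the interference among new selections is needed.
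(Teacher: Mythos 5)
Your core intuition is right and matches the mechanism the paper exploits: the oldest-failure-first discipline ensures that every failure occurring after $t_0$ carries a later timestamp and therefore queues behind the packet at every edge it ever visits, and each ``relevant'' dequeue happens with probability at least $\nicefrac{1}{2\e m}$ per time frame. The gap is in the intermediate reduction to a monotone potential. Even in your coupled process with no injections after $t_0$, the potential $\Phi_t$ is not non-increasing: packets injected before $t_0$ that are still unfailed and in flight through the first phases can fail at times $t>t_0$, increasing $\Phi_t$. Appealing to the ``pessimistic bookkeeping'' of Lemma~\ref{lemma:potentialincrease} does not repair this; that device over-counts failures so that the queue-length process can be treated as a Markov chain, but it does not make $\Phi_t$ monotone. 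Consequently ``the number of time frames until $\Phi_t$ hits $0$'' is not bounded by a sum of $\Phi$ geometric waiting times, and the stochastic-dominance induction you sketch does not close.

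The paper sidesteps this by never tracking $\Phi_t$ after $t_0$ at all. It observes that $\Phi=\Phi(t_0)$ equals the total remaining hop count of packets with failure timestamp $\leq t_0$, and by the oldest-first rule these are exactly the transmissions the packet could ever have to wait for: any later failure (whether from a post-$t_0$ injection or from a pre-$t_0$ in-flight packet) is enqueued behind it wherever they meet. Thus the packet's last hop sits at position at most $\Phi$ in a virtual FIFO queue whose per-time-frame dequeue probability is at least $\nicefrac{1}{m}\cdot(1-\nicefrac{1}{m})^{m-1}\cdot\nicefrac{1}{2}\geq\nicefrac{1}{2\e m}$ (the computation from Lemma~\ref{lemma:potentialdecrease}, applied to the buffer the packet currently occupies), giving an expected wait of at most $2\e m\Phi$. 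Your separate ``transfer'' step is then subsumed: the factor $(1-\nicefrac{1}{m})^{m-1}$ is already computed over all $m-1$ other edges, so additional non-empty buffers created after $t_0$ cannot lower the bound.
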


\begin{proof}
In order to show this claim, we consider the following simplified model that works as an upper bound for the clean-up phases. At the time of failure, all remaining hops of a packet are added to the tail of a FIFO queue. In each time frame, one hop is dequeued with probability $\nicefrac{1}{2 \e m}$. If the queue length is $k$ after adding the hops of a packet, this packet will spend in expectation $2 \e m \cdot \Phi$ time frames in the queue.  

For the actual network, the potential $\Phi$ after the time frame of failure is exactly the number of successful transmission the failed packet has to wait for until it is delivered. Just as in the FIFO queue, in each time frame there is a successful transmission with probability at least $\nicefrac{1}{2 \e m}$. Therefore, the expected number of time frames the packet spends in the network is at most $2 \e m \cdot \Phi$.  
\end{proof}

With this observation, we can proceed to the proof of Theorem~\ref{theorem:packetlatency}.

\begin{proof}[Proof of Theorem~\ref{theorem:packetlatency}]
We consider a single packet from its injection until its delivery. The number of time steps in between is the latency, which is denoted by $L$. The packet is injected during a certain time frame, which we call time frame $0$. It waits at the generator node until the beginning of the next time frame (time frame $1$). Then it crosses one edge in each time frame until it eventually fails or reaches its final destination. By $F$ we denote the number of the time frame in which the packet fails. If the packet reaches its destination without failing, we set $F = \infty$. By $X_i$ we denote the number of time frames that the packet needs from reaching the starting node of the $i$th hop to its final destination. By this definition, we have $L \leq T + X_1 \cdot T$. Set $X_{d+1} = 0$. Furthermore, we let $\Phi_i$ be the potential after the $i$th time frame.

We can write $\Ex{ X_i \mid F \geq i }$, that is the expected time for the hops $i$, \ldots, $d$ given the packet has not failed yet, as follows. Two events can occur: Either the packet does not fail when taking the $i$th hop (i.e. $F \geq i+1$), yielding the remaining number of time frames to be $1 + \Ex{ X_{i+1} \mid F \geq i+1}$. Otherwise we have a failure in exactly that time frame. So formally we get 
\[
\Ex{ X_i \mid F \geq i } = \Pr{ F \geq i + 1 \mid F \geq i } \left( 1 + \Ex{ X_{i+1} \mid F \geq i + 1} \right) + \Pr{ F = i \mid F \geq i } \Ex{ X_i \mid F = i } \enspace.
\]
Multiplying by $\Pr{ F \geq i }$ yields
\[
\Pr{F \geq i} \Ex{ X_i \mid F \geq i } = \Pr{ F \geq i + 1 } \left( 1 + \Ex{ X_{i+1} \mid F \geq i + 1} \right) + \Pr{ F = i } \Ex{X_i \mid F = i } \enspace.
\]
For the latter part, we use Observation~\ref{observation:packetlatency:potential}. In this notation it states $\Ex{X_i \mid F = i} \leq 2 \e m \cdot \Ex{\Phi_i \mid F = i}$. We have already derived bounds on $\Ex{\Phi_i}$ for all $i$. However, $F$ does not necessarily have to be independent since exactly the packet that we are considering might yield a large potential increase.

\begin{claim}
\[
\Pr{ F = i } \Ex{\Phi_i \mid F = i } \leq \frac{2}{mJ}
\]
\end{claim}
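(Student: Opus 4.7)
The plan is to rewrite the left-hand side as $\Ex{\Phi_i \cdot \mathbf{1}[F = i]}$ and to decompose $\Phi_i = \Phi_{i-1} + \Delta_i$, where $\Delta_i$ is the potential increment caused by failures in frame $i$. This gives
\[
\Ex{\Phi_i \cdot \mathbf{1}[F = i]} \leq \Ex{\Phi_{i-1} \cdot \mathbf{1}[F = i]} + \Ex{\Delta_i \cdot \mathbf{1}[F = i]} .
\]
I would carry out both estimates inside the same pessimistic Markov chain that was used for Lemma~\ref{lemma:potentialincrease} and for the geometric tail bound at the end of the previous subsection. In that dominating model the increment $\Delta_i$ and the frame-$i$ interference $I_i$ are drawn from a fixed distribution that does not depend on $\Phi_{i-1}$, and the event $\{F=i\}$ that our specific packet fails in frame $i$ is governed by the frame-$i$ randomness alone. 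Consequently $\Phi_{i-1}$ is genuinely independent of $\mathbf{1}[F=i]$ and of $\mathbf{1}[I_i \geq J]$, and the geometric tail $(1 - \nicefrac{1}{m^2 J})^k$ gives $\Ex{\Phi_{i-1}} \leq m^2 J$.

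Next, I would split $\{F=i\}$ into its two disjoint failure modes, the overload event $\{I_i \geq J\}$ and the event $\{I_i < J,\ \text{the algorithm fails for our packet}\}$, each of which has probability at most $\nicefrac{1}{2(mJ)^4}$ by the per-frame failure estimates established inside the proof of Lemma~\ref{lemma:potentialincrease}, so in particular $\Pr{F = i} \leq (mJ)^{-4}$. Independence then yields $\Ex{\Phi_{i-1} \cdot \mathbf{1}[F = i]} \leq \Ex{\Phi_{i-1}} \cdot \Pr{F = i} \leq m^2 J \cdot (mJ)^{-4} = 1/(m^2 J^3)$. For $\Ex{\Delta_i \cdot \mathbf{1}[F = i]}$, on the algorithm-failure branch I would use $\Delta_i \leq m^2 J$ (since $I_i < J$) together with the independent algorithm-failure probability to obtain $O(1/(m^2 J^3))$; on the overload branch I would use $\Delta_i \leq m^2 I_i$ together with the interference tail $\Pr{I_i \geq (k+1)J} \leq (mJ)^{-4-k}$ from Lemma~\ref{lemma:potentialincrease}, so that a geometric sum gives $\Ex{I_i \cdot \mathbf{1}[I_i \geq J]} = O(J \cdot (mJ)^{-4})$ and a contribution of $O(1/(m^2 J^3))$.

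Summing these pieces yields $\Ex{\Phi_i \cdot \mathbf{1}[F = i]} = O(1/(m^2 J^3))$, which is well below the required $\nicefrac{2}{mJ}$, with plenty of slack to absorb all constants. The main obstacle is the apparent correlation between $\Phi_{i-1}$ and the frame-$i$ quantities $I_i$ and $\mathbf{1}[F=i]$, since in the actual system both are influenced by the injections from the preceding $D$ frames. I would handle this exactly as in the previous subsection, by keeping the whole estimate inside the stochastic-domination construction that turns $\Phi$ into a Markov chain with iid increments drawn from a fixed majorising distribution; under that coupling the independence becomes honest and the remaining work reduces to routine tail integration.
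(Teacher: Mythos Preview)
Your approach differs substantially from the paper's, and it has a genuine gap at the independence step.

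The paper avoids the correlation issue entirely by a simple truncation argument. It writes
\[
\Pr{F=i}\,\Ex{\Phi_i \mid F=i} = \sum_{k\ge 1} \Pr{\Phi_i \ge k,\ F=i}
\]
and splits the sum at the threshold $(mJ)^3$. For $k \le (mJ)^3$ one uses the crude bound $\Pr{\Phi_i \ge k,\ F=i} \le \Pr{F=i} \le (mJ)^{-4}$; for $k > (mJ)^3$ one uses $\Pr{\Phi_i \ge k,\ F=i} \le \Pr{\Phi_i \ge k} \le (1 - \nicefrac{1}{m^2 J})^k$. Both pieces contribute at most $1/(mJ)$. This needs only the two \emph{marginal} bounds already established and the trivial inequality $\Pr{A\cap B}\le\min(\Pr{A},\Pr{B})$; no independence or coupling is required.

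Your route, by contrast, hinges on factoring $\Ex{\Phi_{i-1}\cdot\mathbf{1}[F=i]}$ as $\Ex{\Phi_{i-1}}\cdot\Pr{F=i}$, and this factorisation is not justified by the stochastic-domination construction you invoke. The dominating Markov chain $\tilde\Phi$ with iid increments only gives $\Phi_i \preceq \tilde\Phi_i$ in the stochastic order; it does not give a pathwise inequality $\Phi_i \le \tilde\Phi_i$ on a common probability space on which $\{F=i\}$ also lives. If you build a pathwise coupling by sampling each $\tilde\Delta_j$ conditionally on the actual (pessimistic) $\Delta'_j$, then $\tilde\Phi_{i-1}$ inherits the dependence on the injections in frames $i-D,\ldots,i-1$ and is still correlated with $I_i$ and hence with $\{F=i\}$. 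If instead you take $\tilde\Phi$ as a completely independent process, you cannot pull the indicator $\mathbf{1}[F=i]$ inside the expectation after replacing $\Phi_i$ by $\tilde\Phi_i$. Either way the ``honest independence'' you claim does not follow. The paper's threshold trick is precisely what lets one bypass this difficulty.
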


\begin{proof}
By definition, we have
\[
\Pr{ F = i } \Ex{\Phi_i \mid F = i } = \sum_{k=1}^\infty \Pr{\Phi_i = k, F = i} \leq (mJ)^3 \cdot \Pr{F = i} + \sum_{k=(mJ)^3+1}^\infty \Pr{\Phi_i \geq k } \enspace.
\]
In the previous section we showed that in each time frame, the probability that packets hail is am most $\frac{1}{(m J)^4}$. Since packet injections are independent, this also yields that for our packet of consideration, the failure probability is each time frame is at most $\frac{1}{(m J)^4}$. Thus, we have $\Pr{ F = i } \leq \frac{1}{(m J)^4}$. That is $(mJ)^3 \cdot \Pr{F = i} \leq \frac{1}{mJ}$. Furthermore, we showed $\Pr{\Phi_i \geq k } \leq \left( 1 - \frac{1}{m^2 J} \right)^{k}$. This yields
\begin{align*}
\sum_{k=(mJ)^3+1}^\infty \Pr{\Phi_i \geq k } & \leq \sum_{k=(mJ)^3+1}^\infty \left( 1 - \frac{1}{m^2 J} \right)^{k} \leq \left( 1 - \frac{1}{m^2 J} \right)^{(mJ)^3} \sum_{k=1}^\infty \left( 1 - \frac{1}{m^2 J} \right)^{k} \\
& \leq \left( 1 - \frac{1}{m^2 J} \right)^{(mJ)^3} m^2 J \leq \frac{1}{mJ} \enspace.
\end{align*}
In combination, this shows the claim.
\end{proof}

Putting in this bound, we now get the following simple linear recursion
\[
\Pr{F \geq i} \Ex{ X_i \mid F \geq i } \leq \Pr{ F \geq i + 1 } \Ex{ X_{i+1} \mid F \geq i + 1} + 3 \enspace.
\]
Solving it, we get
\[
\Ex{ X_1 } = \Pr{F \geq 1} \Ex{ X_1 \mid F \geq 1 } \leq 3 d \enspace, 
\]
which shows that $\Ex{L} \leq \Ex{X_1} T + T \leq 3 d T + T = O(d T)$.
\end{proof}

\section{Dynamic Scheduling Protocol for Adversarial Injection}
In order to transfer the achieved results of the previous section to the adversarial injection model, we adapt an approach by Scheideler and V\"ocking~\cite{Scheideler2000}. The idea is to assign each packet a random delay at the time of injection. Then this packet is kept at the generator node until the delay has elapsed. After this time it is treated as if it was actually injected at this time.

We consider an adversarial injection of rate $\lambda = (1 - \varepsilon) / f(m)$. For each packet a delay value $\delta$ from $0$ to $\delta_{\max} - 1$ is chosen uniformly at random, where $\delta_{\max} = \lceil 2 (D + w) / \varepsilon \rceil$. Like in the stochastic model, it waits until the beginning of the next time frame, but now it spends another $\delta$ time frames waiting. Afterwards it is treated like in the stochastic model with $\lambda' = (1 - \nicefrac{\varepsilon}{2}) / f(m)$.

\begin{theorem}
\label{theorem:adversarial}
The expected queue lengths are bounded at any time. The expected latency of a packet is $O(D \cdot w \cdot T / \varepsilon)$.
\end{theorem}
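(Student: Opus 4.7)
The plan is to reduce the adversarial setting to the stochastic one from the previous section by exploiting the random delay $\delta \in \{0, \ldots, \delta_{\max}-1\}$ assigned to each packet at the time of injection. After the deterministic wait until the next frame boundary and the additional random wait of $\delta$ frames, a packet is treated exactly as in the stochastic protocol with rate $\lambda' = (1-\varepsilon/2)/f(m)$. What one needs to check is that the \emph{effective} injection stream produced by the random delay satisfies the same high-probability per-frame bound on the interference measure that was the backbone of Section~4, but now with $\lambda'$ playing the role of $\lambda$.

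First I would fix a time frame $\tau$ and analyze the interference measure $I_\tau$ of the set of packets released in $\tau$. Each such packet was originally injected at some point in the preceding $\delta_{\max}$ frames, i.e.\ within a window of at most $\delta_{\max} \cdot T$ slots. Covering this window by at most $\lceil \delta_{\max} T / w \rceil$ sub-windows of length $w$ and using the $(\lambda, w)$-boundedness of the adversary, the aggregate interference of all packets injected in that window is at most roughly $\delta_{\max} T \lambda$. Since each packet chooses its delay independently and uniformly in $\{0, \ldots, \delta_{\max}-1\}$, the random variable $(W \cdot R_\tau)(e)$ is a weighted sum of independent indicators with expectation at most $\lambda T < \lambda' T$, the multiplicative gap being $\varepsilon/2$.

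Next I would invoke a Chernoff bound analogous to Claim~\ref{claim:potentialincrease:chernoff}, exploiting the independence of the delays across packets, to prove that for the same polylogarithmic choice of $T$ as in the stochastic case (after rescaling $\varepsilon$ by a constant) we have $\Pr{ I_\tau \geq (1+\varepsilon/2) \lambda' T } \leq 1/(2 (m J)^4)$, where $J = (1+\varepsilon/2) \lambda' T$. Once this per-frame tail bound is in place, Lemma~\ref{lemma:potentialincrease} goes through verbatim, Lemma~\ref{lemma:potentialdecrease} is unchanged since the clean-up phase does not depend on how arrivals are generated, and therefore the potential $\Phi$ satisfies the same geometric tail $\Pr{\Phi(t) \geq k} \leq (1 - 1/(m^2 J))^k$. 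This proves bounded expected queue lengths at any time.

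For the latency bound I would decompose the time a packet spends in the system into: up to one frame waiting for the next frame boundary; an expected $\delta_{\max}/2 = O((D+w)/\varepsilon)$ frames of enforced random delay; and an additional $O(D)$ frames inside the stochastic-like regime by Theorem~\ref{theorem:packetlatency} (whose proof uses only the per-frame tail bound and the potential's geometric tail, both of which we have re-established). Summing these contributions gives a bound of order $(D+w) T/\varepsilon$, which is dominated by $O(D \cdot w \cdot T / \varepsilon)$ as claimed. The main technical obstacle is the concentration step: one must carefully track dependencies between packets injected in overlapping adversarial windows, and verify that the Chernoff bound applied to $(W \cdot R_\tau)(e)$ achieves the small failure probability $1/(2(mJ)^4)$ that was implicitly required throughout the analysis of Section~4.
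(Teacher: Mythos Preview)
Your approach is essentially the paper's: isolate Claim~\ref{claim:potentialincrease:chernoff} as the only place where the injection model enters, re-establish it for the randomly delayed adversarial stream with parameter $\lambda'$, and let the rest of Section~4 carry over verbatim; then add the expected random delay to the latency. Two technical points in your sketch need correcting to match the paper's argument.

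First, the window of original injection times feeding frame $\tau$ is not $\delta_{\max}$ frames but $\delta_{\max}+D$ frames: a packet attempting its $j$th hop in frame $\tau$ was \emph{released} $j-1$ frames earlier and \emph{injected} up to $\delta_{\max}$ frames before that, so up to $\delta_{\max}+D-1$ frames ago. The paper handles this by defining $\mathcal{P}_j(e)$ over the last $j+\delta_{\max}$ frames and summing over $j\leq D$. Because $\delta_{\max}\geq 2(D+w)/\varepsilon$, the extra $D$ (and the $+w$ from rounding the window count) only inflates the expectation by a factor $1+\varepsilon/2$, which is exactly why one lands at $\Ex{(W\cdot R)(e)}\leq \lambda' T$ rather than $\lambda T$; your sketch suppresses this and asserts the bound $\lambda T$ directly.

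Second, the indicators entering $(W\cdot R_\tau)(e)$ are not independent: a single packet's delay $\delta_p$ simultaneously determines, for every hop $j$ on its path, whether that hop falls in frame $\tau$, and several of these hops may contribute to the same sum $(W\cdot R_\tau)(e)$ through different edges $e'$. The paper notes that these indicators are negatively associated (balls-in-bins across the $\delta_{\max}$ delay values for each packet, independent across packets) and invokes the Chernoff bound in that form. The dependence you flag ``between packets injected in overlapping adversarial windows'' is not the issue; different packets have independent delays. The dependence is \emph{within} a packet across its hops, and negative association is what resolves it.
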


\begin{proof}
The only point at which the injection model comes into play in the previous section is Claim~\ref{claim:potentialincrease:chernoff}. We have to show that this claim holds for the described protocol in the case of adversarial injection as well. So let us consider a fixed time frame. Let $R(e)$ be again the number of packets that are intended to be transmitted via $e$ in this step. All these packets have been injected at most $D + \delta_{\max}$ time frames ago. Let $\mathcal{P}_j(e)$ be the set of all packets injected in the last $j + \delta_{\max}$ time frames that have $e$ as their $j$th hop. Let $R_{\text{all}}(e) = \sum_{j = 1}^D \lvert \mathcal{P}_j(e) \rvert$. For $p \in \mathcal{P}_j(e)$ let $X_p = 1$ if packet $p$ received a delay such that the $j$th shall take place in the time frame of consideration. With this definition, we have
\[
\Ex{R(e)} = \sum_{j=1}^D \sum_{p \in \mathcal{P}_j(e)} \Ex{X_{p}} = \frac{\sum_{i=1}^D \lvert \mathcal{P}_i (e) \rvert}{\delta_{\max}} = \frac{R_\text{all}(e)}{\delta_{\max}} \enspace.
\]
Thus we get
\[
\Ex{(W \cdot R)(e)} = \sum_{e' \in E} W_{e', e} \frac{R_\text{all}(e)}{\delta_{\max}} \leq \frac{\lVert W \cdot R_\text{all} \rVert_\infty}{\delta_{\max}} \enspace.
\]

All packets in $R_\text{all}$ were injected within at most $\delta_{\max} + D$ time frames, that is at most $(\delta_{\max} + D) \cdot T$ time steps. This interval can be covered by $\lceil (\delta_{\max} + D) \cdot T / w \rceil$ windows of length $w$. By the constraint of the adversary, this yields
\[
\lVert W \cdot R_\text{all} \rVert_\infty \leq \lambda w \lceil (\delta_{\max} + D) \cdot T / w \rceil \leq \lambda (\delta_{\max} + D) \cdot T + \lambda w 
\]

Since we have $\delta_{\max} \geq 2 (D + w) / \varepsilon$, we get
\[
\Ex{(W \cdot R)(e)} \leq \lambda T \frac{\delta_{\max} + D + w}{\delta_{\max}} \leq \lambda T \left( 1 + \frac{\varepsilon}{2} \right) \leq \lambda' T    
\]

Furthermore, for each $e \in E$ the random variable $(W \cdot R)(e)$ is a sum of negatively associated random variables~\cite{Dubhashi1998}. This is, we may apply a Chernoff bound. Combining it with a union bound, we get that for all $\delta > 0$
\[
\Pr{ I \geq (1 + \delta) \lambda' T} \leq \sum_{e \in E} \Pr{(W \cdot R)(e) \geq (1 + \delta) \lambda' T} \leq m \cdot \left( \frac{\e^\delta}{(1 + \delta)^{1+\delta}} \right)^{\lambda' T} \enspace.
\]

That is, the bounds for queue lengths can be transferred. The bound on the latency also holds after having waited the delay. Adding the expected delay yields the claim.
\end{proof}
\section{Application to SINR-based Algorithms}
\label{sec:sinr}
In the SINR model, the network nodes are assumed to be located in a metric space. This allows to model the signal propagation as follows. If some node transmits at power level $p$ then at distance $d$ this signal is received at a strength of $\nicefrac{p}{d^\alpha}$, where $\alpha$ is the so-called path-loss exponent. A transmission can successfully be received if the \emph{signal-to-interference-plus-noise ratio} (SINR) is above some threshold $\beta$. That is, a transmission via a link $\ell = (s, r)$ if for the set $S \subseteq E$ of simultaneous transmissions we have 
\[
\frac{p(\ell)}{d(s, r)^\alpha} \geq \beta \left( \sum_{\substack{\ell' = (s', r') \in S \\ \ell \neq \ell'}} \frac{p(\ell')}{d(s', r)^\alpha} + \noise \right) \enspace.
\]
We choose the impact matrix $W$ depending on whether the transmission powers are fixed for the respective links or they can be chosen for each transmission by the protocol. 

\subsection{Fixed Power Assignments}
Let us first consider the case in which the network links use fixed transmission powers. That is the power value used for a transmission over link $\ell$ is always $p(\ell)$. We define the weight matrix $W$ based on the the relative amount of interference of one link on another one, the so-called \emph{affectance} \cite{Halldorsson2009, Kesselheim2010}. For two links $\ell, \ell' \in E$ it is defined as
\[
a_p(\ell, \ell') = \min \left\{ 1, \beta \frac{p(\ell)}{d(s, r')^\alpha} \Bigg/ \left( \frac{p(\ell')}{d(s', r')^\alpha} - \beta \noise \right) \right\} \enspace.
\]

We achieve the best competitive ratios when dealing with a linear power assignment. That is, $p(\ell)$ is proportional to $d(\ell)^\alpha$ for each link $\ell$ -- and thus the received signal strength is the same for any link. In this case, we set the matrix entries to $W_{\ell, \ell'} = a_p(\ell', \ell)$. With this definition the interference measure $I$ is apart from constant factors the one defined in \cite{Fanghaenel2009}. For this reason, we can use the algorithm from \cite{Fanghaenel2009} that achieves a schedule length of $O(I + \log^2 n)$ whp. Applying the transformation, we get a protocol allowing for injection rates $\Omega(1)$. The lower bound on $I$ in \cite{Fanghaenel2009} states that for each set of transmission requests that can be served in a single step we have $I = O(1)$. Thus the optimally achievable injection rate is $O(1)$ as well. That is, independent of the network size we are only a constant factor worse.

\begin{corollary}
For linear power assignments there is a stable, constant-competitive distributed protocol.
\end{corollary}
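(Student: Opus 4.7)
The plan is to instantiate the dynamic protocol framework of Sections~4 and~5 with the Fanghänel et al.~\cite{Fanghaenel2009} static algorithm, whose schedule length is $O(I + \log^2 n)$ with high probability for the linear power assignment and the matrix $W_{\ell,\ell'} = a_p(\ell',\ell)$ described just above the claim. My first step is to check that this static algorithm satisfies the interface required at the start of Section~4: it must produce a schedule of length $f(m)\cdot I + g(m,n)$ with failure probability at most $1/(2n^4)$, where $f$ is independent of $n$ and $g$ grows sublinearly in $n$. Setting $f(m)=O(1)$ and $g(m,n)=O(\log^2 n)$ meets both conditions, and the failure probability can be pushed to $1/(2n^4)$ either by tuning the polynomial in the underlying high-probability guarantee or by a constant number of independent repetitions.

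Next, I would directly invoke Theorem~\ref{theorem:adversarial} (and the stochastic analogue from the preceding section). Because $f(m)=O(1)$, that theorem yields a stable protocol for every injection rate $\lambda < 1/f(m) = \Omega(1)$, with expected queue lengths and latencies bounded as stated there. Since the Fanghänel et al.\ algorithm is itself distributed and the outer transformation requires only the global clock, the network size $m$, the injection rate $\lambda$, and (in the adversarial case) the window size $w$, the resulting protocol is distributed in the sense discussed in the introduction.

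To upgrade stability to a constant competitive ratio, I would use the matching lower bound from \cite{Fanghaenel2009}: under the linear power assignment every set of links that is simultaneously feasible in a single slot has interference measure $I=O(1)$. Hence any stable protocol can sustain only injection rates of $O(1)$, so the $\Omega(1)$ rate achieved by our protocol is within a constant factor of the optimum. This is precisely the $\gamma$-competitiveness defined in the introduction.

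The main ``obstacle'' is not really an obstacle but a bookkeeping point: the Fanghänel et al.\ guarantee is additive in $I$ and $\log^2 n$, whereas Theorem~\ref{theorem:largepacketnumbers} of Section~3 is phrased for guarantees of the multiplicative form $f(n)\cdot I$. The resolution is simply that Section~4 already accepts the additive form $f(m)\cdot I + g(m,n)$, so the Section~3 transformation is not needed here, and the corollary follows by a one-step verification together with the lower bound from \cite{Fanghaenel2009}.
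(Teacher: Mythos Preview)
Your proposal is correct and matches the paper's own argument essentially line for line: the paper also plugs the $O(I+\log^2 n)$ algorithm from \cite{Fanghaenel2009} directly into the Section~4/5 framework with $f(m)=O(1)$ and $g(m,n)=O(\log^2 n)$ (bypassing Section~3), obtains stability for $\lambda=\Omega(1)$, and then cites the lower bound $I=O(1)$ for single-slot feasible sets from \cite{Fanghaenel2009} to conclude constant competitiveness. Your observation that the Section~3 transformation is unnecessary here because the additive form is already accepted by Section~4 is exactly the point.
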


Generalizing linear power assignments, we consider power assignments that are (sub-)linear and monotone. That is for two links $\ell, \ell' \in E$ with $d(\ell) \leq d(\ell')$ we have $p(\ell) \leq p(\ell')$ and $p(\ell) / d(\ell)^\alpha \geq p(\ell') / d(\ell')^\alpha$. In this case, we set the matrix $W$ to $W_{\ell, \ell'} = \max\{ a_p(\ell, \ell'), a_p(\ell', \ell) \}$ if $d(\ell) \leq d(\ell')$ and $W_{\ell, \ell'} = 0$ otherwise. We apply the distributed algorithm in \cite{Kesselheim2010}. This algorithm needs for $n$ packet $O(\bar{A} \cdot \log n)$ steps, where $\bar{A}$ denotes the \emph{maximium average affectance} that is define by $\bar{A} = \max_{M \subseteq \mathcal{R}} \avg_{\ell' \in M} \sum_{\ell \in M} a_p(\ell, \ell') = \max_{M \subseteq \R} \frac{1}{\lvert M \rvert} \sum_{\ell' \in M} \sum_{\ell \in M} a_p(\ell, \ell')$. Here, $\mathcal{R}$ denotes the multiset of all transmission requests. We observe that for the interference measure $I$ defined by the matrix $W$, we have $I \geq \bar{A} / 2$. Therefore, we can apply the transformation from Section~\ref{sec:largepacketnumbers} to get a distributed algorithm computing schedules of length $O(I \cdot \log m + \log m \cdot \log^2 m)$ with high probability. This yields a protocol that is stable for all injection rates in $\Omega(1 / \log m)$. Furthermore, the lower bounds on the optimal schedule length in \cite{Kesselheim2010} show that all stable protocols are limited by some injection rate $O(\log m)$.  

\begin{corollary}
For monotone (sub-)linear power assignments there is a stable, $O(\log^2 m)$-competitive distributed protocol. 
\end{corollary}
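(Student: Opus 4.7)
The plan is to chain together three pieces the paper has already built: the static distributed scheduling algorithm of~\cite{Kesselheim2010}, the black-box transformation of Theorem~\ref{theorem:largepacketnumbers}, and the dynamic protocol of Sections~4 and~5. The only genuinely new ingredient is the combinatorial observation $I \geq \bar{A}/2$ relating the interference measure of the chosen matrix $W$ to the maximum average affectance; once that is in hand everything else slots in as a direct black-box composition.

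To verify $I \geq \bar{A}/2$ I would fix any multiset $M \subseteq \mathcal{R}$ and take $R = \mathbf{1}_M$. By construction of $W$, for distinct $\ell, \ell'$ exactly one of $W_{\ell, \ell'}, W_{\ell', \ell}$ is nonzero (namely the entry whose first index is the shorter link), and it equals $\max\{a_p(\ell, \ell'), a_p(\ell', \ell)\}$. Since $\max(a, b) \geq (a+b)/2$, summing over ordered pairs gives
\begin{equation*}
\sum_{\ell, \ell' \in M} W_{\ell, \ell'} \;\geq\; \tfrac{1}{2} \sum_{\ell, \ell' \in M} a_p(\ell, \ell').
\end{equation*}
Since $I = \lVert W \cdot R \rVert_\infty$ is at least the average of $(W \cdot R)(\ell)$ over $\ell \in M$, the above gives $I \geq \bar{A}(M)/2$, and maximizing over $M$ yields the claim.

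Next I would feed the distributed algorithm of~\cite{Kesselheim2010}, which schedules any instance in $O(\bar{A} \cdot \log n) = O(I \cdot \log n)$ steps with high probability, into Theorem~\ref{theorem:largepacketnumbers} with $f(n) = O(\log n)$. This yields a static distributed algorithm of length $O(I \log m + \log n \cdot \log^2 m + \log^2 n \cdot \log m)$, whose leading coefficient no longer depends on $n$, and whose remainder $g(m, n)$ is polylogarithmic in $n$ and therefore sublinear, satisfying the hypothesis of the dynamic construction. Plugging this static algorithm into the protocol of Section~4 (or Section~5 for adversarial injection) produces a stable distributed protocol for every injection rate $\lambda < (1-\varepsilon)/f(m)$, i.e.\ any $\lambda = \Omega(1/\log m)$.

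Finally I would invoke the lower bound from~\cite{Kesselheim2010} stating that every set feasible in a single step has $\bar{A} = O(\log m)$; in particular no stable protocol can sustain an injection rate exceeding $O(\log m)$. The competitive ratio is therefore at most $O(\log m) / \Omega(1/\log m) = O(\log^2 m)$, as claimed. The subtlest point in the chain is the very first step: the asymmetric definition of $W$ was tailored precisely so that it both dominates the affectance up to a factor of two and fits the linear-interference-measure formalism; getting this compatibility right is what allows the rest of the pipeline to go through without reopening any of the earlier analyses.
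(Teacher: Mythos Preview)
Your proposal is correct and mirrors exactly the argument the paper gives in the paragraph preceding the corollary: define $W$ via the asymmetric max-affectance, verify $I \geq \bar{A}/2$, pass the $O(\bar{A}\log n)$ algorithm of \cite{Kesselheim2010} through Theorem~\ref{theorem:largepacketnumbers} and then into the dynamic protocol of Sections~4--5, and compare against the lower bound on the optimal injection rate. Two tiny wrinkles that do not affect correctness: when $d(\ell)=d(\ell')$ both $W_{\ell,\ell'}$ and $W_{\ell',\ell}$ are nonzero (so ``exactly one'' should read ``at least one'', and your inequality only improves); and for the final ratio you need that feasible sets have $I=O(\log m)$ with respect to $W$, not merely $\bar{A}=O(\log m)$---but that is precisely what the schedule-length lower bound in \cite{Kesselheim2010} delivers, so the citation is right even if the phrasing is slightly off.
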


At this point, one has to remark that in \cite{Halldorsson2011a} an improved analysis of the algorithm in \cite{Kesselheim2010} has been presented. It remains an open problem to fit this analysis into our framework.

\subsection{Powers Chosen by the Algorithm}
There are two approaches to face the setting in which each transmission may use an individual power. On the one hand, one can still define fixed transmission powers for each link in an oblivious fashion, that is, without taking into consideration which transmissions actually take place. Using linear power assignments as described in the previous section, the results in \cite{Fanghaenel2009} yield a $O(\log \Delta \cdot \log m)$-competitive protocols. Here, $\Delta$ is the ratio between the length of the longest and the shortest link. Using square-root power assignments \cite{Fanghaenel2009a, Halldorsson2009a}, we get $O(\log \log \Delta \cdot \log^2 m)$-competitive protocols. Considering fading metrics, that is the setting where $\alpha$ is greater than the doubling dimension, the protocols are $O(\log \Delta)$ respectively $O(\log \log \Delta \cdot \log m)$-competitive. 

We can also exploit the possibility of selecting powers for each transmission individually. For this case only centralized approximation algorithms are known~\cite{Kesselheim2011}. In this case, we set for two links $\ell = (s, r), \ell' = (s', r') \in E$ the weight $W_{\ell, \ell'} = \min \left\{ 1, \frac{d(s, r)^\alpha}{d(s, r')^\alpha} + \frac{d(s, r)^\alpha}{d(s', r)^\alpha} \right\}$ if $d(\ell) \leq d(\ell')$ and $0$ otherwise. The algorithm in \cite{Kesselheim2011} yields schedule lengths of $O(I \cdot \log n)$ with this measure. We have lower bounds of $O(1)$ in fading metrics resp. $O(\log m)$ in general metrics.
\begin{corollary}
For arbitrary transmission powers, there is a stable centralized protocol, that is $O(\log m)$-competitive in fading metrics and $O(\log^2 m)$-competitive in general metrics.
\end{corollary}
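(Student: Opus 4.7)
The plan is to derive this corollary as a direct instantiation of the framework, perfectly parallel to the two preceding corollaries for fixed power assignments. First, I would verify that the matrix $W$ defined just above is a valid impact matrix in the sense of Section~\ref{sec:networkmodel}: for $\ell = \ell'$ both ratios inside the $\min$ equal $1$, so $W_{\ell, \ell} = \min\{1, 2\} = 1$, and all other entries lie in $[0,1]$ by construction. Thus the interference measure $I = \lVert W \cdot R \rVert_\infty$ is well-defined, and the centralized algorithm of \cite{Kesselheim2011} produces with high probability a schedule of length $O(I \cdot \log n)$ with respect to exactly this $W$.

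Second, I would apply Theorem~\ref{theorem:largepacketnumbers} with $f(n) = O(\log n)$. The theorem then gives a static centralized algorithm whose schedule length is $O(I \cdot \log m + \log n \cdot \log^2 m + \log^2 n \cdot \log m)$ with high probability. Writing this in the form $f(m) \cdot I + g(m,n)$ used at the beginning of Section~4 yields $f(m) = O(\log m)$ and $g(m, n)$ growing only polylogarithmically in $n$, in particular sublinearly in $n$. Plugging this algorithm into Theorem~4 and Theorem~\ref{theorem:adversarial} produces a centralized protocol that is stable (bounded expected queue lengths and bounded expected latency) for every injection rate $\lambda < 1/f(m) = \Omega(1/\log m)$, under either the stochastic or the adversarial injection model.

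Third, I would translate this throughput into competitive ratios using the lower bounds already stated in the surrounding text. Any set of transmission requests that can be served in a single time step has $I = O(1)$ in fading metrics and $I = O(\log m)$ in general metrics; hence the best conceivable protocol can sustain injection rates of at most $O(1)$ in fading metrics and $O(\log m)$ in general metrics. Comparing with our $\Omega(1/\log m)$ achievable rate yields a competitive ratio of $O(1) / \Omega(1/\log m) = O(\log m)$ in the fading case and $O(\log m)/\Omega(1/\log m) = O(\log^2 m)$ in the general case, which is exactly what the corollary claims. The protocol is only centralized because the underlying static algorithm from \cite{Kesselheim2011} is centralized; both transformations (Theorem~\ref{theorem:largepacketnumbers} and the dynamic wrapper) preserve this.

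The main obstacle I expect is the very first step: justifying that the matrix $W$ above is in fact an upper bound for the interference structure that the algorithm of \cite{Kesselheim2011} uses, i.e., that its $O(I \cdot \log n)$-guarantee, originally phrased in terms of the chosen-power affectance, can be rewritten in terms of $\lVert W \cdot R \rVert_\infty$ with this specific $W$. Concretely, one has to argue that, after restricting attention to pairs with $d(\ell) \leq d(\ell')$ (i.e., shorter link receiving interference from longer link), the symmetrized expression $\min\{1, d(s,r)^\alpha/d(s,r')^\alpha + d(s,r)^\alpha/d(s',r)^\alpha\}$ dominates, up to a constant factor, the affectance used in \cite{Kesselheim2011}. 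Once this interference-measure comparison is established, the rest of the proof is a straightforward chaining of Theorem~\ref{theorem:largepacketnumbers}, Theorem~4, Theorem~\ref{theorem:adversarial} and the stated lower bounds, exactly as in the two previous corollaries.
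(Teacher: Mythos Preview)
Your proposal is correct and follows essentially the same approach as the paper. The paper does not give a separate proof for this corollary; the surrounding text defines $W$, asserts that the algorithm of \cite{Kesselheim2011} yields $O(I \cdot \log n)$ schedules with this measure, states the lower bounds $O(1)$ in fading metrics and $O(\log m)$ in general metrics, and lets the corollary follow by the same chaining of Theorem~\ref{theorem:largepacketnumbers} and the dynamic framework that you describe. Your flagged ``obstacle'' (that $I$ with this $W$ controls the measure used in \cite{Kesselheim2011}) is something the paper simply imports from that reference rather than re-proving.
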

This protocol has the drawback of being centralized and for this reason not applicable in practical settings. However, this results shows the problem tractable is in general. In order to construct a distributed protocol, a possible solution could be to spend some time for preprocessing. Even an $O(I \cdot \log m + \mathrm{poly}(m))$ algorithm could be used to get the same competitive ratio.
\section{Further Applications}
\label{sec:furtherapplications}
Defining the matrix $W$ and using the right static algorithm, we can immediately get results for old and new models. For example, for packet routing, setting $W$ to the identity matrix and using the trivial single-hop algorithm, we get stable protocols for all $\lambda<1$. In this section, we demonstrate how to apply our framework in the multiple-access channel or in more involved models by introducing a conflict graph on the network links.

\subsection{Multiple Access Channel}
In order to model the multiple-access channel, we set all entries of the matrix $W$ to $1$. This yields the interference measure to be the number of packets -- a lower bound in this case. We get different results depending on the assumption if stations have individual ids or if all are running the same protocol. More precisely, we get stable protocols for all $\lambda < \nicefrac{1}{\e}$ without station ids. Using ids, we get guarantee stability for all $\lambda < 1$. This matches the best results and the respective lower bounds, see e.g. \cite{Goldberg2000} for details. 

For the symmetric case, i.e. there are no ids, Algorithm~\ref{alg:macstatic} is an acknowledgment-based static algorithm. Using it, we can build stable protocols for all $\lambda<\nicefrac{1}{\e}$.  

\begin{algorithm}
\caption{Static scheduling algorithm for the multiple access channel}
\label{alg:macstatic}
$\xi := \frac{\ln(2 \phi \frac{2 \e (1+\delta)^2}{\delta^2} \ln n)}{\ln( 1 - \frac{1}{\e (1 + \delta)})}$\;
$s := 2 \phi \ln n \frac{2 \e^2 (1 + \delta)^2}{\delta^2}$\;
\For{$i := 1$ to $\xi$}{
Assign each packet a delay independently uniformly at random of at most $\left\lfloor \left( 1 - \frac{1}{\e ( 1 + \delta)} \right)^i n \right\rfloor$\;
Let packets with the same delay be transmitted in the same time step\;
}
\For{$i := 1$ to $s \e ( \phi + 1) \ln n$}{
In each step each packet is transmitted independently with probability $\nicefrac{1}{s}$\;
}
\end{algorithm}

\begin{lemma}
Given constants $\phi \geq 1$ and $\delta > 0$, Algorithm~\ref{alg:macstatic} is a symmetric algorithm for the multiple-access-channel transmitting $n$ packets in $(1+\delta) \e n + O(\phi^2 \log^2 n)$ steps with probability at least $1 - \frac{1}{n^\phi}$
\end{lemma}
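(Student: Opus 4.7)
My plan is to split the analysis into a deterministic running-time bound and a probabilistic correctness bound, and within the latter to analyze the two \texttt{for} loops in turn.

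For the running time, set $q := 1 - 1/(\e(1+\delta))$. The first \texttt{for} loop spends on the order of $q^i n$ time slots in iteration $i$, so summing the resulting geometric series bounds its length by $nq/(1-q) = n(\e(1+\delta) - 1)$ plus a lower-order $O(\xi) = O(\log n)$ rounding correction. The second \texttt{for} loop spends exactly $s \cdot \e (\phi+1) \ln n = O(\phi^2 \log^2 n / \delta^2)$ time slots. Adding these two gives $(1+\delta) \e n + O(\phi^2 \log^2 n)$, matching the stated bound.

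For the first \texttt{for} loop I would argue by induction on $i$ that with high probability the number $n_i$ of packets surviving after iteration $i$ obeys a geometric invariant of the form $n_i \leq c \cdot q^i n$ for some constant $c = c(\delta) = \Theta(1)$. The key observation is that when $n_{i-1}$ packets pick delays uniformly from the roughly $\psi_i \approx q^i n$-slot delay range of iteration $i$, the number of \emph{singletons} (packets whose delay is distinct from all others) has expectation $n_{i-1}(1 - 1/\psi_i)^{n_{i-1}-1}$; iterating the associated deterministic density map $r \mapsto (r/q)(1 - \e^{-r/q})$ shows the density ratio $r_i := n_i/\psi_i$ contracts towards $0$ for every $\delta > 0$, so the invariant is preserved in expectation. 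For concentration I view the singleton count as a function of the $n_{i-1}$ independent delay choices, each of which can alter the count by at most $2$, and apply McDiarmid's bounded-differences inequality to get a deviation of order $\sqrt{n_{i-1} \log n}$ with failure probability at most $n^{-(\phi+2)}$. The slack between the expected number of successes and the number needed to push $n_i$ below $c q^i n$ is $\Theta(\delta) \cdot n_{i-1}$, so the slack dominates the deviation as long as $\psi_i = \Omega(\log n / \delta^2)$; the choice of $\xi$ ensures this throughout, and a union bound over the $\xi = O(\log n)$ iterations bounds the first-loop failure by $1/(2n^\phi)$.

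For the second \texttt{for} loop I condition on the first-loop invariant, so at most $n_\xi = O(s)$ packets remain. In each round every packet transmits independently with probability $1/s$, so any fixed remaining packet is the unique transmitter in a given round with probability at least $(1/s)(1 - 1/s)^{s-1} \geq 1/(\e s)$. Over the $k := \e s (\phi+1) \ln n$ independent rounds, the probability a specific packet never succeeds is at most $(1 - 1/(\e s))^k \leq n^{-(\phi+1)}$; union-bounding over the at most $s$ remaining packets gives cleanup failure at most $s \cdot n^{-(\phi+1)} \leq 1/(2n^\phi)$ for $n$ large enough. Summing the two failure contributions yields the required $1/n^\phi$. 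The main obstacle is controlling the first-loop invariant: the per-iteration slack is only $\Theta(\delta)$ of the current population, so $\xi$ and $s$ must be tuned precisely so that the invariant only has to hold while $\psi_i \geq \Omega(\log n/\delta^2)$, which is exactly the choice made in the algorithm.
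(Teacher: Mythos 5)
Your high-level decomposition is exactly the paper's: a deterministic running-time count, then a probabilistic analysis of the two \texttt{for} loops combined by a union bound. Your analysis of the second loop --- each of the at most $s$ surviving packets succeeds in a given round with probability at least $\frac{1}{s}(1-\frac{1}{s})^{s-1} \geq \frac{1}{\e s}$, then a per-packet coupon-collector bound over $\e s(\phi+1)\ln n$ rounds and a union bound over packets --- is the paper's argument verbatim. Where you genuinely diverge is the first loop: the paper simply invokes Lemma~2 of Goldberg et~al.\ as a black box for the per-iteration concentration, while you re-derive it via a singleton count, a density-map contraction, and McDiarmid. That substitution would be perfectly acceptable, but as written the contraction claim is false.

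Concretely, you assert that the map $g(r) = (r/q)(1-\e^{-r/q})$ with $q = 1-\tfrac{1}{\e(1+\delta)}$ and $r_i = n_i/\psi_i$, $\psi_i \approx q^i n$, contracts to $0$ for every $\delta > 0$. Its nonzero fixed point is $r^* = q\ln\bigl(\e(1+\delta)\bigr)$, and it is repelling: $g'(r^*) = 1 + \tfrac{(1-q)}{q}\ln\bigl(\e(1+\delta)\bigr) > 1$. The starting value is $r_0 = n_0/\psi_0 = 1$, and $1 < r^*$ requires $\bigl(1-\tfrac{1}{\e(1+\delta)}\bigr)\bigl(1+\ln(1+\delta)\bigr) \geq 1$, which only holds for $\delta$ above roughly $0.4$. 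For smaller $\delta$, the balls-to-bins ratio in iteration~$1$ is already $n/\lfloor qn\rfloor \approx 1/q > 1$, the ratio then increases monotonically, the per-iteration singleton fraction tends to $0$, and $n_i$ stalls at a constant fraction of $n$ instead of dropping to $O(s)$; the invariant $n_i \leq c\,q^i n$ is then unsustainable for any constant $c \geq 1$. The root cause is an off-by-one in the stated delay range: iteration $i$ should use a range of size roughly $q^{i-1}n$ (so $n_{i-1} \leq q^{i-1}n$ balls go into at least that many bins and the density stays at most $1$, which lies strictly below $r^*$ for all $\delta>0$), and this is exactly the parameterisation under which the paper applies Goldberg's Lemma~2 (``at most $s$ balls into $s$ bins leaves at most $qs$ non-singletons whp''); the paper's running-time sum $\sum_{i=0}^{\xi} q^i n$ also reflects this indexing. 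Once the index is shifted, your density map, the $\Theta(\delta)n_{i-1}$ slack, and the McDiarmid step all go through as you sketched --- so the idea is sound, but the argument as written relies on a contraction that fails precisely in the interesting regime of small $\delta$.
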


\begin{proof}
The total number of time slots used is
\[
\sum_{i=0}^\xi \left( 1 - \frac{1}{\e (1+\delta)} \right)^i \cdot n + \e (\phi + 1) \ln n \cdot \phi \ln n \frac{2 \e^2 (1 + \delta)^2}{\delta^2} \leq (1 + \delta) \e n + O(\phi^2 \log^2 n) \enspace.
\]

Let $X_i$ be the number of packets remaining after the $i$th iteration in the first stage, $X_0 = n$. Lemma~2 from \cite{Goldberg2000} yields the following bound

\[
\Pr{X_i \geq \left( 1 - \frac{1}{\e (1 + \delta)}\right) s \;\Bigg|\; X_{i-1} \leq s} \leq F(s, \delta)\enspace, \quad \text{where } F(s, \delta) = \exp\left(- s \frac{ \delta^2 }{ 2 \e^2 (1 + \delta)^2} \right)
\]

That is, we get
\[
\Pr{X_\xi \geq \left( 1 - \frac{1}{\e (1 + \delta)}\right)^\xi n} \leq \sum_{i = 1}^\xi F\left( \left( 1 - \frac{1}{\e (1 + \delta)}\right)^{i}, \delta\right) \leq \frac{1}{2 n^\phi} \enspace.
\]

For the second stage, we assume that
\[
X_\xi \leq \left( 1 - \frac{1}{\e (1 + \delta)}\right)^\xi n = s\enspace.
\]
This yields that in each step each packet is successfully transmitted with probability at least
\[
\frac{1}{s} \left( 1 - \frac{1}{s} \right)^{s-1} \geq \frac{1}{\e \cdot s} 
\]
This is, the combined probability for a packet not to be successfully transmitted in any of the steps of the second stage is at most
\[
\left( 1 - \frac{1}{\e \cdot s} \right)^{\e s (\phi + 1) \ln n} \leq \frac{1}{2 n^{\phi + 1}}
\]
Taking a union bound yields the claim.
\end{proof}

So, we get a stable protocol for each injection rate $\lambda < \nicefrac{1}{\e}$. This is exactly the same bound as in \cite{Goldberg2000}, unfortunately with a higher packet latency. However, with our transformation the result also hold for adversarial injection.  

\begin{corollary}
There is a symmetric stable protocol for each injection rate $\lambda < \nicefrac{1}{\e}$ on the multiple access channel.
\end{corollary}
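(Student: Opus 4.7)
The plan is to instantiate the framework of Sections~4 and 5 with the multiple-access-channel encoding in which every entry of $W$ equals $1$ (so the interference measure $I$ is simply the total number of packets) and with Algorithm~\ref{alg:macstatic} as the static subroutine $\mathcal{A}$. The preceding lemma provides exactly the guarantee required: for any constants $\phi \geq 1$ and $\delta > 0$, Algorithm~\ref{alg:macstatic} produces a schedule of length $(1+\delta)\e \cdot I + O(\phi^2 \log^2 n)$ with probability at least $1 - 1/n^\phi$. This fits the template $f(m) \cdot I + g(m,n)$ required by Section~4 with the constant function $f(m) = (1+\delta)\e$ and the sublinear tail $g(m,n) = O(\phi^2 \log^2 n)$; fixing $\phi \geq 5$ pushes the failure probability below $1/(2n^4)$, matching what the stability theorems demand.

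For any target injection rate $\lambda < 1/\e$, I would pick $\delta > 0$ small enough that $\lambda < 1/((1+\delta)\e) = 1/f(m)$, and then invoke the stochastic stability theorem of Section~4 for the i.i.d.\ injection model and Theorem~\ref{theorem:adversarial} for the window-adversary model as black boxes. Both give bounded expected queue lengths and, together with Theorem~\ref{theorem:packetlatency}, polylogarithmic expected latency per packet. Symmetry of the composite protocol is immediate: Algorithm~\ref{alg:macstatic} relies only on independent random delays and independent biased coin flips; the frame-level random delay assignment of Section~4 is identical at every station; and the clean-up experiment uses the uniform probability $1/m$ at each buffer. No station identifier enters anywhere.

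The most delicate point is not a genuine obstacle but an alignment question: the outer framework calls $\mathcal{A}(J, mJ)$ with a declared upper bound on the number of packets, whereas Algorithm~\ref{alg:macstatic} is phrased in terms of the \emph{actual} number $n$. I would run it with its parameter set to the declared bound $mJ$, which only worsens the additive term to $O(\log^2(mJ))$, still sublinear in $n$ and harmlessly absorbed by $g$. With that bookkeeping in place the corollary reduces entirely to the earlier theorems, and letting $\delta$ shrink to $0$ sweeps out every rate strictly below $1/\e$.
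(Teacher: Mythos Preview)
Your proposal is correct and matches the paper's own (implicit) argument: set every entry of $W$ to $1$, feed Algorithm~\ref{alg:macstatic} into the Section~4/5 framework with $f(m)=(1+\delta)\e$ and sublinear $g(m,n)=O(\phi^2\log^2 n)$, then let $\delta\to 0$ to cover all $\lambda<\nicefrac{1}{\e}$. The paper states the corollary without proof, merely noting it follows from the lemma and the transformation; your write-up supplies exactly those details, including the symmetry check and the bookkeeping about the declared packet bound, neither of which the paper spells out.
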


For the case that each station has a unique id and stations can distinguish between silence and a successful transmission, a very simple algorithm can still do better.

\begin{lemma}
There is an asymmetric algorithm for the multiple-access channel transmitting $n$ packets in $n + m$ steps.
\end{lemma}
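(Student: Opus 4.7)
My plan is to exhibit a simple silence-based round-robin protocol that exploits both the unique IDs and the ability to distinguish silence from a successful transmission. Label the stations $1, \ldots, m$ by their IDs, and let $k_i$ denote the number of packets to be sent by station $i$, so that $\sum_i k_i = n$. Every station listens to the channel at every slot and maintains a counter of the number of idle (silent) slots it has observed so far. Station $i$ begins transmitting as soon as it has observed exactly $i-1$ silences; it then transmits its $k_i$ packets in consecutive slots (each succeeding since it is alone on the channel) and, once its queue is empty, remains silent for exactly one slot. That terminating silence is precisely the $i$th silence in the execution and hands the channel over to station $i+1$.

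The correctness and slot count follow from a straightforward induction on $i$. Assuming station $i$ starts transmitting in slot $s_i := \sum_{j < i}(k_j + 1) + 1$ (with $s_1 = 1$), its $k_i$ packets occupy slots $s_i, \ldots, s_i + k_i - 1$ (all successful, since only station $i$ is active), and slot $s_i + k_i$ is silent. Stations $j > i$ see this silence as their $i$th observed silence, so $s_{i+1} = s_i + k_i + 1$, closing the induction. The corner case $k_i = 0$ is harmless: station $i$ is simply silent in slot $s_i$, which serves simultaneously as its only slot and its end-of-turn signal, and it still contributes $k_i + 1 = 1$ to the count.

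Summing the contributions gives a total of $\sum_{i=1}^{m} (k_i + 1) = n + m$ slots, which is the claimed bound. There is no real obstacle; the only point requiring a bit of care is to verify that empty stations do not break the silence-counting argument and that every station correctly identifies its own turn from purely local observation of the channel, both of which are handled by the induction above.
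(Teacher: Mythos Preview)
Your proposal is correct and is essentially the same as the paper's proof, which describes the identical \textsc{Round-Robin-Withholding} protocol (attributed to Chlebus et al.): stations take turns in ID order, each signalling completion by a single silent slot, for a total of $n+m$ slots. Your version is in fact more careful, spelling out the induction and the $k_i = 0$ corner case that the paper leaves implicit.
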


\begin{proof}
The algorithm is straight forward. It was, for example, used as \textsc{Round-Robin-Withholding} by Chlebus et al.~\cite{Chlebus2006} before. The algorithm is deterministic. In each round at most one of the station transmits. Station $0$ starts transmitting in the first round and continues until all requests have been served. Each station $i+1$ listens to the channel while station $i$ transmits. After $i$ has transmitted all packets, in one time slot no transmission is performed at all. This is the signal for station $i+1$ to start transmitting. 
\end{proof}

\begin{corollary}
There is an asymmetric stable protocol for each injection rate $\lambda < 1$ on the multiple-access channel.
\end{corollary}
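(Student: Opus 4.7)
The plan is to instantiate the dynamic-scheduling framework developed in the previous sections with the static algorithm from the immediately preceding lemma (Round-Robin-Withholding) applied to the multiple-access channel encoding $W \equiv 1$.

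First I would recall the choice of the impact matrix for the multiple-access channel: setting $W_{e,e'} = 1$ for every pair of edges $e,e'$, the interference measure $I = \lVert W \cdot R\rVert_\infty$ collapses to the total number of packets $n$. Hence the preceding lemma, which gives a deterministic schedule of length $n + m$ for any instance with $n$ packets, can be rewritten as a schedule of length $f(m) \cdot I + g(m,n)$ with $f(m) = 1$ and $g(m,n) = m$. Determinism trivially makes the success probability $1 \geq 1 - \nicefrac{1}{2n^4}$, meeting the hypothesis of the dynamic transformations. Moreover, $g(m,n) = m$ is constant in $n$ and therefore sublinear in $n$, so the frame-length requirement $T \geq \frac{4 f(m)}{\varepsilon^2} g(m, \frac{m}{f(m)} T)$ is satisfied by any $T$ with $T \geq \frac{4 m}{\varepsilon^2}$ (together with the other lower bounds on $T$ listed in Section~4).

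With these parameters in hand, the stochastic version of the dynamic protocol (from Section~4) and its adversarial counterpart (Theorem~\ref{theorem:adversarial}) apply verbatim. Both yield stability whenever $\lambda < \nicefrac{1}{f(m)} = 1$, along with the bounded-expected-queue-length and $O(d \cdot T)$ resp.\ $O(D \cdot w \cdot T / \varepsilon)$ latency guarantees. The protocol is asymmetric because Round-Robin-Withholding relies on distinct station identifiers to hand over the channel silently.

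Essentially no new work is required: the only thing to check is that the deterministic single-hop MAC algorithm fits the $f(m) \cdot I + g(m,n)$ template with sublinear $g$, and this is immediate. If any step looks like an obstacle it is verifying the bound on $T$: since $g(m,n) = m$ does not depend on $n$, the bound simplifies to a single inequality linear in $m$, and picking $T = \Theta(m/\varepsilon^3)$ satisfies all the requirements simultaneously. The corollary then follows directly from Theorem~\ref{theorem:adversarial} (and its stochastic counterpart).
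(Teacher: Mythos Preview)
Your proposal is correct and is exactly the argument the paper intends: the corollary is stated without proof precisely because it follows by plugging the Round-Robin-Withholding lemma ($n+m$ steps, hence $f(m)=1$, $g(m,n)=m$) into the dynamic framework of Section~4 with the all-ones matrix $W$. Your verification of the side conditions (deterministic success, $g$ sublinear in $n$, the frame-length inequality) is the only content needed, and you have supplied it.
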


\subsection{Conflict Graphs}
We can describe further models by a conflict graph. The set of vertices is the set of network links $E$ and (possibly weighted) directed edges indicate if (or to what extent) a transmission on one link is interfered by a transmission on another link. Implementing for example the \emph{node constraint model}, in which each node can only transmit or receive a single packet in each step, we have edges between links that share an endpoint. In this case, we can get constant-competitive since the conflict graph has bounded independence and the algorithm from \cite{Fanghaenel2009} can be adapted.

For the more general case that the conflict graph has inductive independence number $\rho$, we can build $O(\rho \cdot \log m)$-competitive protocols. Conflict graphs with constant $\rho$ for example result from the radio network model in disk graphs, the protocol model or distance-2 matching in disk graphs. The inductive independence number of a graph is defined as follows.

\begin{definition}[\cite{Wan2009,Hoefer2011}]
For a graph $G=(V,E)$, the inductive independence number $\rho$ is the
smallest number such that there is an ordering $\pi$ of the vertices
satisfying: For all $v \in V$ and all independent sets $M \subseteq
V$, we have $\left\lvert M \cap \left\{ u \in V \mid \{u, v\} \in E,
    \pi(u) < \pi(v) \right\} \right\rvert \leq \rho$.
\end{definition}

We define the matrix $W_{e, e'}$ by setting $W_{e, e'} = 1$ if there is an edge $(e, e')$ or $(e', e)$ in the conflict graph and $\pi(e) \leq \pi(e')$. All other matrix entries are set to $0$. This way $I = \max_e \sum_{e' \text{ conflicts with } e, \pi(e') \leq \pi(e)} R(e')$. That is we take the maximum over all edges in the graph and take a summed number of requests at all conflicting edges of smaller index. This yields that no protocol can achieve injection rates greater than $\rho$. 

We consider the following simple distributed algorithm to build an $O(\rho \cdot \log m)$ competitive protocol: In each step, via each link $e$ with probability $1 / 4 I$ each packet is transmitted

\begin{theorem}
The above algorithm needs $O(I \cdot \log n)$ time slots with probability $1 - \nicefrac{1}{n^c}$ for any constant $c$.
\end{theorem}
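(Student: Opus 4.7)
The strategy I would pursue is a standard ALOHA-style analysis: show that a single backlogged link successfully transmits a packet in each step with probability $\Omega(1/I)$, then Chernoff + union bound over the $T = c \cdot I \log n$ steps to conclude.

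First, fix a step $t$ and a link $e$ with $R^{(t)}(e) \geq 1$. A packet on $e$ is transmitted successfully iff (a) $e$ makes an attempt, an event of probability $p = 1/(4I)$, and (b) no link in its conflict neighborhood $S(e)$ also attempts a transmission. Since the coin flips of different links are independent, event (b) has probability at least $(1-p)^{|S(e)\cap B|}$ where $B = \{e' : R^{(t)}(e') \geq 1\}$ is the set of backlogged links. Thus the per-step success probability on $e$ is at least $p(1-p)^{|S(e)\cap B|}$.

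The key technical step is to bound $|S(e) \cap B|$ by $O(I)$. Split $S(e)\cap B$ into the forward part $\{e' \in S(e)\cap B : \pi(e) \leq \pi(e')\}$ and the backward part $\{e' \in S(e)\cap B : \pi(e') < \pi(e)\}$. For each forward $e'$ we have $W_{e,e'}=1$ and $R^{(t)}(e') \geq 1$, so by the definition of $I$ the forward part contributes at most $(W R^{(t)})(e) \leq I$ to the count. For each backward $e'$, we instead have $W_{e',e}=1$, and one uses the inductive-independence ordering $\pi$ to argue (via a charging/coloring scheme analogous to the proof of the lower bound on $I$ for this setting) that the backward set is also of size $O(I)$, giving $|S(e) \cap B| \leq C \cdot I$ for an absolute constant $C$. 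Plugging back in, the per-step success probability is at least $p(1-p)^{CI} \geq \frac{1}{4I} \cdot e^{-C/2} = \Omega(1/I)$.

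Given this lower bound, in $T = c \cdot I \log n$ consecutive steps the expected number of successful transmissions on any fixed backlogged edge $e$ is $\Omega(c \log n)$; since the outcomes across steps are independent given the evolving state, a standard Chernoff-Azuma argument (restricted to steps where $e$ is still backlogged, treating earlier successes as only helping) yields that the number of successful transmissions on $e$ exceeds $R(e)$ with probability at least $1 - n^{-(c+1)}$ for $c$ sufficiently large. A union bound over the at most $m \leq n$ edges (or, alternatively, directly over the $n$ packets) completes the proof.

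The main obstacle is the bound in the middle paragraph, specifically the backward-conflict accounting: $\|W R^{(t)}\|_\infty \leq I$ gives the forward bound for free, but controlling $|\{e' \in S^-(e) : R^{(t)}(e') \geq 1\}|$ is precisely the point where the inductive-independence property of $\pi$ must be invoked rather than a naive row-sum of $W$. If this accounting cannot be tightened to $O(I)$ in the worst case, one would need either to modify the algorithm (e.g., making transmission probabilities depend on $\pi$) or to amortize the analysis across edges rather than bound each edge pointwise.
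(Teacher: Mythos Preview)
Your per-link approach has a genuine gap at exactly the point you flagged: the backward neighborhood size $|\{e' \in S(e) \cap B : \pi(e') < \pi(e)\}|$ is \emph{not} $O(I)$ in general, and inductive independence does not rescue it. Consider a star in the conflict graph: a center link $e$ with $\pi(e)$ minimal, and $k$ leaf links $e_1,\ldots,e_k$ each conflicting only with $e$, all with larger $\pi$-value; put one packet on every link. Then for each $e_i$ the one-sided sum counts only $e$ and $e_i$ itself, so $I=2$, yet $|S(e)\cap B|=k$. With $p=1/(4I)=1/8$ the per-step success probability at $e$ is $(1/8)(7/8)^k$, which is exponentially small in $k$; your Chernoff step for $e$ therefore fails. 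Inductive independence only bounds the size of \emph{independent} subsets of the down-neighborhood by $\rho$, and $B$ need not be independent, so the ``charging/coloring'' you allude to cannot give a pointwise $O(I)$ bound here.

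The paper sidesteps this with a global amortized argument rather than a per-link one. It bounds the expected \emph{total} number of collisions in a step by
\[
\sum_{e} r(e)\sum_{e'\in N(e)}\frac{r(e')}{4I}
\;=\;\frac{1}{4I}\cdot 2\sum_{e} r(e)\sum_{\substack{e'\in N(e)\\ \pi(e')\le\pi(e)}} r(e')
\;\le\;\frac{1}{2I}\sum_{e} r(e)\cdot I
\;=\;\frac{n_t}{2},
\]
using the symmetry of the undirected conflict relation to fold the full neighborhood into the one-sided one, where the inner sum is $\le I$ by definition. This yields $\Ex{n_{t+1}\mid n_t}\le(1-\tfrac{1}{8I})\,n_t$, and iterating plus Markov gives the $O(I\log n)$ bound. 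In the star example this is exactly what happens: the leaves clear in $O(\log n)$ steps on average, after which $e$ succeeds; the analysis captures this automatically, whereas a pointwise bound at $e$ cannot. If you want to salvage a per-link argument, you would indeed need to change the algorithm (e.g., make $e$'s transmission probability depend on its position in $\pi$), as you suspected.
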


\begin{proof}
Let $n_t$ be the random variable indicating how many packets still need to be transmitted after the $t$th time slot, $n_0 = n$. Let us consider a fixed time slot $t$. Let $\mathcal{P}$ be the set of remaining packets. Let $X_p$ be the $0$/$1$ random variable indicating if there is a transmission attempt via $e$ in this time slot. We have
\[
\sum_{p \in \mathcal{P}} \Ex{Y_p} \leq \sum_{e \in E} r(e) \Ex{\sum_{e' \in N(e)} X_e} = \sum_{e \in E} r(e) \sum_{e' \in N(e)} \frac{r(e')}{4 I} = \sum_{e \in E} r(e) \cdot \frac{1}{2 I} \sum_{e' \in \Gamma_\pi(e)} r(e') \leq \sum_{e \in E} \frac{r(e)}{2} = \frac{n_t}{2} \enspace.
\]
The probability for half of the packets to collide is at most $\nicefrac{1}{2}$. Each of these packets makes a transmission attempt with probability $\frac{1}{4 I}$. Therefore, we have
\[
\Ex{ n_{t+1} \mid n_t} \leq \left( 1 - \frac{1}{8 I} \right) \cdot n_t \enspace.
\]
This yields the claim.
\end{proof}
\section{Aspects of Distributed Protocols}
In general, it is desirable to design distributed dynamic scheduling protocols. Our transformation requires the nodes to have access to a global clock (in order to build the time frames), and the network size $m$, the injection rate $\lambda$ and (in the adversarial model) to the window size $w$. The other properties depend on the algorithm the protocol was derived from. Particularly, the amount of feedback the protocol needs is identical to the one of the static algorithm. For example, we can start from a static \emph{acknowledgement-based} algorithm, that is the only feedback it gets is if it its own transmission was received. Transforming this algorithm, the dynamic protocol will also be acknowledgement-based. Furthermore, if the algorithm is the same for all nodes, we derive a symmetric protocol.

Fortunately, the required information is available at the time of deployment. So our protocol can be considered distributed if the static algorithm is. However, at this point the natural question arises whether all these assumptions are necessary, particularly the knowledge of a global clock, allowing the construction of common time frames. For the multiple-access channel it can be shown that having local clocks does not weaken the protocols significantly. Even having an acknowledgement-based protocol, local clocks can be synchronized \cite{Goldberg2000}. In our case this is different. We can show that we cannot get $m / 2 \ln m$-competitive without a global clock in the SINR model with uniform transmission powers. This is quite a strong bound because $O(m)$-competitiveness can already be trivially achieved by neglecting geometry aspects and using the multiple-access-channel model.

\begin{theorem}
\label{theorem:localclocks}
There is no stable acknowledgement-based protocol with local clock for the SINR model with uniform transmission powers that is $m / 2 \ln m$-competitive.
\end{theorem}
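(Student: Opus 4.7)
The plan is to exhibit an adversarial lower-bound construction in the SINR model with uniform transmission powers on which no protocol using only local clocks and per-sender acknowledgement feedback can come within a factor $m/(2 \ln m)$ of the optimal injection rate. The construction will consist of two ingredients: a geometric placement of $m$ links realizing a ``phase-sensitive'' feasibility structure under uniform powers, and a corresponding adversary whose injection pattern is easy to serve with a global clock but not without one.

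For the geometry, I would arrange $m$ senders and receivers so that in each time slot exactly one link is feasible (or more generally, so that the feasible family has a cyclic structure whose period a global clock can track). Under uniform powers this can be done by placing all $m$ receivers close enough that pairwise SINR interference is prohibitive, so that at most one link transmits successfully per slot. An optimal protocol with a global clock then achieves injection rate $\lambda^*=\Omega(1)$ by a TDMA / round-robin schedule that activates the correct link in each slot. The adversary correspondingly injects one packet per slot at the link that is feasible in that slot, according to a cyclic schedule; with a global clock this is served immediately with no waste.

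The key step is the indistinguishability / averaging argument for local-clock, acknowledgement-based protocols. Since every sender sees only its own transmission outcomes and has no common reference time, shifting all local clocks by an arbitrary offset, together with cyclically shifting the adversary's injection pattern by the same offset, produces an execution statistically indistinguishable from the original. I would therefore consider the uniform distribution over all $m$ cyclic shifts and bound the \emph{expected} per-slot throughput of any fixed protocol under this distribution. A sender can only acquire phase information through its own acknowledgements, and each ack conveys at most $O(\log m)$ bits about the global phase; combined with a coupon-collector-style argument this limits the fraction of slots in which the protocol transmits on the currently feasible link to at most $O((\ln m)/m)$, giving achievable rate at most $\lambda^*\cdot 2\ln m/m$.

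The main obstacle will be pinning down the geometric realization so that the feasibility structure really is cyclic under uniform SINR (rather than, say, always permitting the same feasible set) and so that no cleverer use of acknowledgement feedback than ``random access with learned phase'' is possible. I expect this to require a careful coupling between two executions whose injection schedules differ only by a phase shift, showing inductively that until a successful transmission occurs, the protocol's state at each sender is identically distributed across the two executions; then a union bound over the $m$ possible phases yields the $\Omega((\ln m)/m)$ bound on the success rate and hence the claimed $m/(2\ln m)$ competitive-ratio lower bound.
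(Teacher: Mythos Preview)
Your proposal has a genuine gap in the geometric construction. In the SINR model with a fixed network, feasibility does not depend on the time slot: whether a set of links succeeds is determined solely by which links transmit simultaneously, not by when. So there is no way to realize a ``cyclic feasibility structure'' in which ``in each time slot exactly one link is feasible.'' What you actually build --- $m$ links whose receivers are so close that at most one can succeed at a time --- is just a multiple-access channel. But on a multiple-access channel, acknowledgement-based protocols \emph{can} synchronize local clocks (this is precisely the Goldberg et al.\ result the paper cites), because collisions give every sender feedback about the others. Your indistinguishability-under-shift argument therefore breaks: after the first collision the senders' states are no longer identically distributed across phase shifts, and the coupon-collector bound does not follow.

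The paper's construction avoids this by building in an \emph{asymmetry} rather than a cycle. It uses $m-1$ short links that always succeed regardless of what else is transmitting, plus one long link that succeeds only when every short link is silent. The crucial point is that the short links receive no information whatsoever from the environment (every attempt is acknowledged), so their behavior is invariant under arbitrary time shifts of their local clocks. A global clock lets you alternate slots between short and long links and achieve $\lambda<1/2$. Without one, you randomize the start times of the short links independently; stability forces each to transmit with average rate close to $\lambda$, and an independence calculation shows that for $\lambda\geq (\ln m)/m$ the probability that all $m-1$ short links are simultaneously silent is strictly below $\lambda$, so the long link cannot keep up. The missing idea in your approach is this one-sided information structure: you need links that are deaf to the rest of the network, not links that all hear each other.
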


\begin{proof}
We consider the network given in Figure~\ref{figure:localclocks}. It consists of $m-1$ short links. On each of these links transmissions can be carried out without collisions no matter which other transmissions take place. In contrast, transmissions on the long link can only be successfully carried out when all small links remain silent. Let in each step one packet arrive at each link with probability $\lambda$. Having access to a global clock, we can get stable protocols for all $\lambda < \nicefrac{1}{2}$ by using even time slots for transmissions on the short links and odd time slots for transmissions on the long links. 

\begin{figure}
\begin{center}
\begin{tikzpicture}

        \tikzstyle{vertex}=
        [%
          draw=black,%
          minimum size=1mm,%
          circle,%
          thick%
        ]
 
        \foreach \x in {1,2,...,5,7}
        {
           \node (ns\x) [vertex] at (\x,-.5) {};
           \node (nt\x) [vertex] at (\x,.5) {};
           \path [thick,shorten >=1pt,-stealth'] (ns\x) edge (nt\x);
        }
        
           \node at (6,-.5) {\ldots};
           \node at (6,.5) {\ldots};
 
           \node (nsl) [vertex] at (0,0) {};
           \node (ntl) [vertex] at (8,0) {};
           \path [thick,shorten >=1pt,-stealth'] (nsl) edge (ntl);

\end{tikzpicture}
\end{center}
\caption{The instance considered in the proof of Theorem~\ref{theorem:localclocks}}
\label{figure:localclocks}
\end{figure}
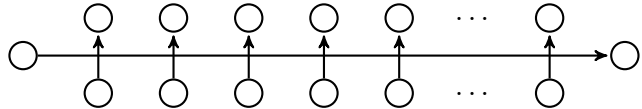

Now we show that for $\lambda \geq \frac{\ln m}{m}$ no acknowledgement-based protocol with only a local clock can be stable. For each small link $i$ and each time slot $t$ let $q_{i, t}$ be the probability that a packet is transmitted via this link in the respective time slot, taking the average over all random packet arrivals. Note that delaying the start of the protocol at a link does not change this behavior. This is due to the fact that it cannot get any feedback form the rest of the network -- as interference is never too high, each transmission attempt is immediately successful.

Since we assume stability there has to be some $k \in \NN$ such that for each interval of length $k$ the expected number of transmissions is at least $(1 - \nicefrac{1}{m}) \lambda$. That is, for each $t_0 \in \NN$, we have $\sum_{t = t_0}^{t_0 + k -1 } q_{i, t} \geq (1 - \nicefrac{1}{m}) \lambda$. Assign each small link a delay from $1, \ldots, k$ independently uniformly at random. This way, for each time slot, the probability that a small link transmits a packet is at least $(1 - \nicefrac{1}{m}) \lambda$. Since the delays were independent, the probability that no small link transmits at all is
\[
\left( 1 - \left(1-\frac{1}{m}\right)\lambda \right)^{m-1} < \frac{\ln m}{\e^{\ln m}} \leq \lambda \enspace.
\]

Now consider the long link. In order to have a successful transmission, none of the small links may be transmitting. As we have just shown, the average number of slots in this happens is at most $\nicefrac{1}{m}$. Thus, even when attempting a transmission in each time slot, we cannot achieve stability for $\lambda \geq \frac{\ln m}{m}$.

% In order to guarantee delivery for packets injected on the long link, there must be rounds of complete silence by the short links. 
% 
% Each stable acknowledgement-based protocol transmits at most than $\log n/n$ packets on average per step via the least used link.
% 
% Consider the situation that on each small link at least $\ln m/m$ packets are transmitted on average per step. It now suffices to show that on the long link at most $\ln m/m$ packets are transmitted on average per step.
% 
% Due to the fact that the protocol is acknowledgement-based, for each link and each step the probability that a packet is transmitted in this step is independent of the other links. Furthermore, to guarantee stability, choosing a step uniformly at random the probability that a packet is transmitted via a link is at least $\ln m/m$.
% 
% Therefore, the probability for the long link to have a silent round (the only chance of transmitting successfully), is at most
% \[
% \left( 1 - \frac{\ln m}{m} \right)^m \leq \frac{1}{\e^{\ln m}} \leq \frac{\ln m}{m}
% \] 
\end{proof}
\section{Discussion and Open Problems}
In this paper we have shown a general technique to transfer results from static to dynamic packet scheduling in a wireless network. This transformation is independent of the respective interference model. All accesses to the wireless network are performed via a given algorithm for static problems. Improving, adapting or extending this static algorithm suffices to build a new dynamic protocol. This gives a strong motivation for studies of the static scheduling problems.

A possible direction for future work could be considering unreliable networks in the given models. Unreliable communication has been an emerging topic in related fields. For example, an adversarial jammer \cite{Awerbuch2008,Richa2010} and unreliable transmission links \cite{Kuhn2010} in the radio-network model have been considered. Our transformation in principle also allows to be applied on unreliable networks by adapting the respective static algorithm. To name a trivial extension one can consider the case that each transmission is lost with some probability even if interference is small enough. It suffices to consider the effect on the respective static schedule length.

Furthermore, it could be interesting which information is really necessary in which model to design the protocol. We have shown that the global clock is inevitable for our transformation. However, it remains an open question whether knowing the network size, the injection rate and the window size is really necessary to build a protocol.

%\clearpage
%\bibliographystyle{plain}
\bibliographystyle{abbrv}
\bibliography{bibliography}
%\let\oldbibliography\thebibliography
%\renewcommand{\thebibliography}[1]{%
%  \oldbibliography{#1}%
%  \setlength{\itemsep}{-1.5pt}%
%}
%\bibliography{bibliography}
\clearpage

\end{document}